\newtheorem{theorem}{Theorem}
\newtheorem{lemma}{Lemma}
\newtheorem{corollary}{Corollary}
\newtheorem{definition}{Definition}
\newcommand{\norm}[1]{\left \Vert #1\right\Vert}
\newcommand{\abs}[1]{\left\vert#1\right\vert}
\newcommand{\set}[1]{\left\{#1\right\}}
\newcommand{\parr}[1]{\left (#1\right )}
\newcommand{\brac}[1]{\left [#1\right ]}
\newcommand{\ip}[1]{\left \langle #1 \right \rangle }
\newcommand{\Real}{\mathbb R}
\newcommand{\Natural}{\mathbb N}
\newcommand{\Integer}{\mathbb Z}
\newcommand{\eps}{\varepsilon}
\newcommand{\too}{\rightarrow}
\newcommand{\T}{\mathcal{T}} 
\newcommand{\M}{\mathcal{M}}
\newcommand{\V}{\mathcal{V}}
\newcommand{\E}{\mathcal{E}}
\newcommand{\F}{\mathcal{F}}
\newcommand{\C}{\mathcal{C}} 
\newcommand{\U}{\mathcal{U}} 
\newcommand{\W}{\mathcal{W}} 
\newcommand{\B}{\mathcal{B}} 
\newcommand{\Q}{\mathcal{Q}} 
\newcommand{\area}[1]{\abs{#1}}
\newcommand{\length}[1]{\abs{#1}}
\newcommand {\1}{\mathrm{\textbf{1}}}
\newcommand{\one}{\1}
\newcommand{\eg}{{\it e.g.}}
\newcommand{\ie}{{\it i.e.}}
\title{A Linear Variational Principle for Riemann Mappings \\ and Discrete Conformality}
\author{Nadav Dym, Yaron Lipman, Raz Slutsky\\ Weizmann Institute of Science}
\begin{document}

        \maketitle
        
\begin{figure}
	\includegraphics[width=\columnwidth]{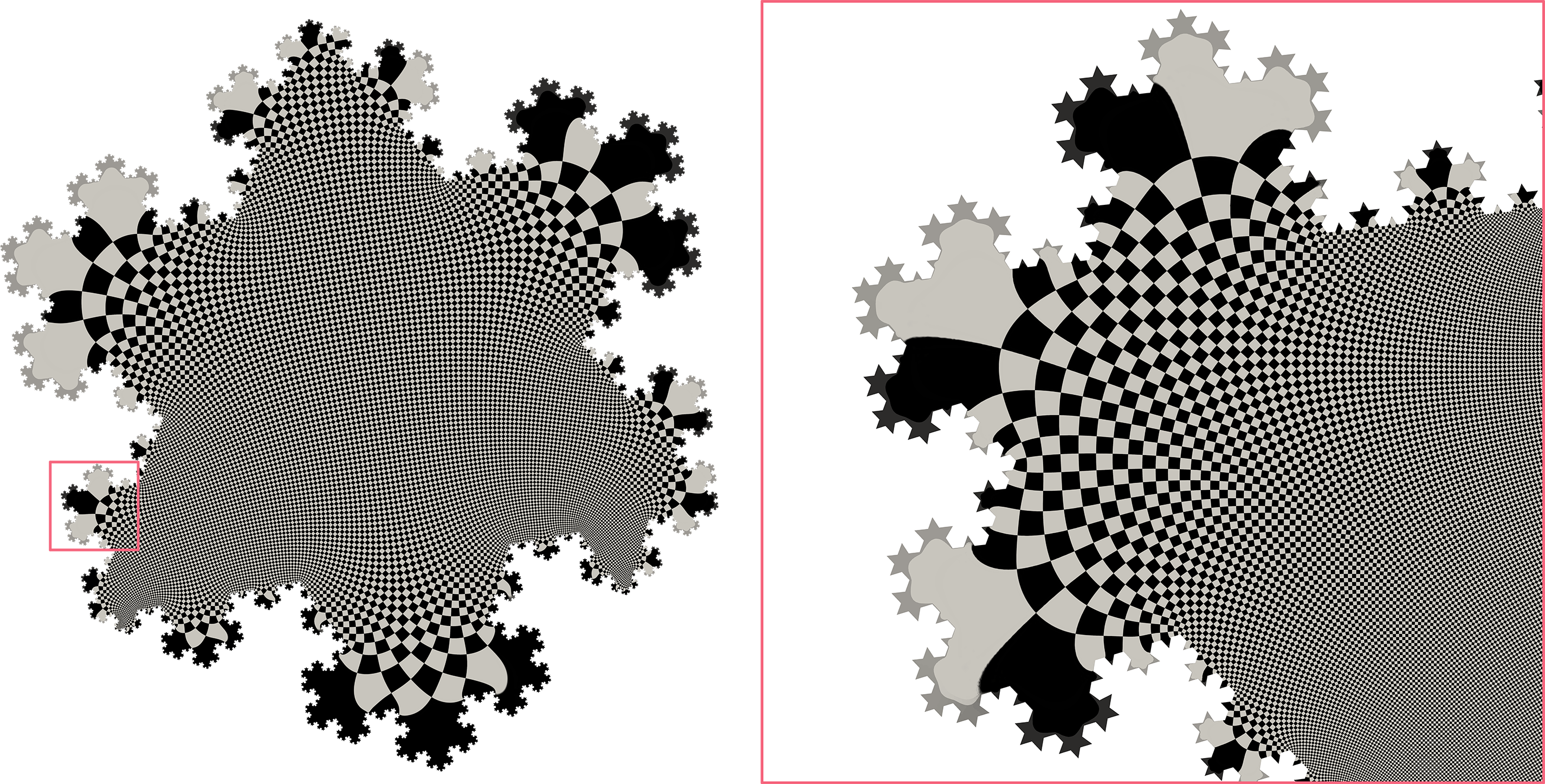}
	\caption{An approximation of the notorious Riemann map from a Koch snow-flake (computed with 6 recursions) to a triangle. The approximation consists roughly 6 million vertex mesh and captures different resolutions of this map as shown in the blow-up on the right.  }
     \label{fig}
     \vspace{-7pt}
\end{figure}

\begin{abstract}
        We consider Riemann mappings from bounded Lipschitz domains in the plane to a triangle. We show that in this case the Riemann mapping has a linear variational principle: it is the minimizer of the Dirichlet energy over an appropriate affine space. By discretizing the variational principle in a natural way we obtain discrete conformal maps which can be computed by solving a sparse linear system.  We show that these discrete conformal maps converge to the Riemann mapping in $H^1$, even for non-Delaunay triangulations. Additionally, for Delaunay triangulations the discrete conformal maps converge uniformly and are known to be bijective. As a consequence we show that the Riemann mapping between two bounded Lipschitz domains can be uniformly approximated  by composing the Riemann mappings between each Lipschitz domain and the triangle.   
\end{abstract}

\section{Introduction}


The Riemann mapping theorem states that there is a biholomorphic  mapping between any two simply connected planar bounded Lipschitz domains\footnote{See \cite{leoni2009first} Def. 12.9 for a definition of Lipschitz domains.} which extends to an homeomorpism between the closures of the domains.  
One of the central themes in the emerging field of Discrete Differential Geometry (DDG) \cite{bobenko2008discrete} aims at developing discrete analogues of conformal mappings. Often the discrete structure at question is a triangulation $\M=(\V,\E,\F)$ of a planar bounded Lipschitz domain $\Omega$, and the question asked is how to place its vertices in the plane or alternatively set its edge lengths to define a discrete analogue of a conformal map into the plane. One important benchmark for discrete conformal mappings is \emph{convergence}. Namely: Does the discrete conformal map converge to the Riemann mapping under refinement of the triangulation $\M$? 

In this paper we construct a \emph{linear} variational principle for the Riemann mapping between a planar bounded Lipschitz domain $\Omega$ and a triangle domain $\T$. We use this principle to devise an algorithm, based on simple piecewise-linear finite-elements, for defining discrete conformal mapping between a simply connected polygonal domain $\Omega$ with arbitrary triangulation $\M$ and a general triangle domain $\T$. This class in particular includes the recent Orbifold-Tutte algorithm \cite{Aigerman:2015:OTE:2816795.2818099} for the case where $\M$ is a Delaunay triangulation and $\T$ is a triangle orbifold (\ie, equilateral or right-angle isosceles). 

The algorithm for computing discrete conformal maps is  \emph{linear} in the sense that it consists of solving a single sparse linear system. We prove that these discrete mappings  converge in the $H^1$ norm to the Riemann mapping $\Phi: \Omega\too \T$ under refinement of the triangulation $\M$. Furthermore, in the case of the Orbifold-Tutte algorithm, where the initial triangulation $\M$ is Delaunay and the triangle $\T$ is an Euclidean Orbifold, the convergence is also uniform over the closure $\bar \Omega$. For two simply-connected polygonal domains $\Omega$, $\Omega'$ with Delaunay triangulations we prove that the composition of the Orbifold-Tutte mappings converge uniformly to the Riemann mapping $\Phi:\Omega\too\Omega'$. 

The linear variational principle is derived from a novel tight linear relaxation of Plateau's problem in the 2D case. Plateau's problem seeks for a surface with minimal area spanning a prescribed curve $\Gamma\subset\Real^d$, $d=2,3$. It is well-known that Plateau's problem can be solved by minimizing the Dirichlet energy of a parameterization $X:B\too \Real^d$, where $B$ is the open unit disc, among all admissible mappings  $X\in \C(\bar B,\Gamma)$ with a (weakly) homeomorphic boundary map $X\vert_{\partial {B}}:\partial B\too \Gamma$ fixing three points on the boundary (see \eg, \cite{dierkes2010minimal}). Formulated this way, Plateau's problem is a variational problem with a convex quadratic energy (Dirichlet) and a non-linear admissible set of functions, $\C(B,\Gamma)$. Therefore, it corresponds to a non-linear partial differential equation in general. When $\Gamma\subset\Real^2$, the unique minimizer of Plateau's variational problem is the Riemann mapping.  We consider a particular instance of Plateau's variational problem: instead of $B$ we consider $\Omega$ as the base domain, and we make a particular choice of $\Gamma = \partial \T \subset \Real^2$ fixing the pre-images of the corners of $\T$. Still, even in this simplified setting, the respective set of admissible mappings, $\C(\Omega,\partial \T)$, for the Plateau's variational problem is non-linear (it is convex, however). We introduce a \emph{relaxation}  of this variational problem by replacing the non-linear admissible set $\C(\Omega,\partial \T)$ with a \emph{linear superset} of admissible mappings $\C^*(\Omega,\partial \T)\supset \C(\Omega,\partial \T)$. Surprisingly, this new variational problem is \emph{tight}, that is, it has a unique solution and this solution is the Riemann mapping $\Phi:\Omega\too \T$. Since this variational problem corresponds to a \emph{linear} partial differential equation, we can employ more or less standard finite-element theory to prove convergence of the algorithm.

The power of this approach is illustrated in Figure~\ref{fig}, where we consider the problem of computing the Riemann mapping of the Koch snow-flake to a triangle. This problem is challenging due to the fragmented nature of the boundary, and requires a high resolution map which would be difficult to achieve using non-linear methods. We obtain such a high resolution map by computing our discrete conformal map from a triangulation of the snow-flake with approximately six million vertices. Solving for the conformal map approximation in this case, using Matlab's linear solver, takes approximately two minutes on an Intel Xeon processor. 

The quality of the discrete conformal mapping is visualized in the standard fashion: A scalar function is defined on the triangle, which represents a black-and-white coloring of a grid. The function is then pulled back by the computed discrete Riemann mapping. Note that the $90^\circ$ angles are preserved by the map. The RHS of figure~\ref{fig} visualizes the high resolution of the map near the boundary.

\section{Related work}
The notion of a discrete conformal mapping of a triangulation $\M$ is a rather well-researched area. It is rich with constructions and algorithms, each with its own definition of discrete conformality, often inspired by some property of smooth conformal mappings. Although we focus here on discrete conformal mappings, we note that there are other numerical algorithms with convergence guarantees to the Riemann mapping based on the Schwartz-Christoffel formula \cite{trefethen1980numerical,natarajan2009numerical}, the zipper algorithm \cite{marshall2007convergence}, polynomial methods \cite{delillo1994accuracy}, and others \cite{henrici1993applied,porter2005history}.

Probably the first discrete conformal mapping is the circle packing introduced in \cite{thurston1979geometry,thurston1985finite}. Circle packing defines a discrete conformal (more generally, analytic) mapping of a triangulation by packing circles with different radii centered at vertices in the plane. These radii can be seen as setting edge lengths in $\M$. Convergence of circle packing to the Riemann mapping was proved in \cite{rodin1987convergence,he1996convergence}. An efficient algorithm for circle packing was developed in \cite{collins2003circle}. A variational principle for circle packing was found in \cite{de1991principe}. Discrete Ricci flow was developed in \cite{chow2003combinatorial} and was shown to converge to a circle packing. Circle patterns \cite{bowers2003planar} generalize circle packings and allow non-trivial intersection of circles; a variational principle for circle patterns was discovered in \cite{bobenko2004variational}. In \cite{luo2004combinatorial} discrete conformality is defined by averaging conformal scales at vertices; in \cite{springborn2008conformal} an explicit variational principle and an efficient algorithm are developed for this equivalence discrete conformality relation. Note that while circle packing has a convex variational principle, it is not linear. Additionally circle packing was shown to converge uniformly on compact subsets of $\Omega$ while our algorithm converges uniformly on all of $\bar \Omega$ and also converges in $H^1$. 

A natural tool, which we also use in this paper,  to handle discrete conformality of triangulations is the Finite Elements Method (FEM) \cite{ciarlet2002finite}. Since Riemann mappings consist of two conjugate harmonic functions, researchers have constructed discrete conformal mappings by  pairs of conjugate discrete harmonic functions defined via the Dirichlet integral  \cite{pinkall1993computing,levy2002least,desbrun2002intrinsic}. These algorithms are linear but do not satisfy any prescribed boundary conditions and are not known to converge to the Riemann mapping. 
Convergence to the Riemann mapping, or more generally the solution of Plateau's problem,  can be obtained by minimizing the Dirichlet energy \cite{tsuchiya1987discrete,tsuchiya2001finite} or a conformal energy \cite{hutchinson1991computing}, while imposing non-linear boundary conditions. Solving these non-convex variational problem is a computational challenge.


 \section{A linear variational principle for Riemann mappings}
We consider a bounded, simply-connected Lipschitz domain $\Omega\subset \Real^2$ with an oriented boundary $\partial\Omega $ and a target triangle domain $\T \subset \Real^2$ with corners $c_1,c_2,c_3\in\Real^2$ positively oriented w.r.t.~$\partial \Omega$. A two dimensional version of Plateau's variational problem is:\begin{subequations}\label{e:plateau}
\begin{align}
\min_{X } & \quad E_D(X) \\
\mathrm{s.t.} & \quad X\in \C(\Omega,\partial \T),
\end{align}
\end{subequations}
where the Dirichlet energy of a map $X(u,v)=(x(u,v), y(u,v))^T:\Omega \too \Real^2$ is defined as $$E_D(X)=\frac{1}{2}\int_\Omega  |X_u|^2+|X_v|^2\, $$
and $\abs{\cdot}$ denotes the standard Euclidean norm of a $2$-vector in $\Real^2$, and the partial derivatives are to be interpreted in the distributional sense. The set of admissible mappings $\C(\Omega,\partial \T)$ is defined as follows: We denote by $H^1(\Omega,\Real^2)$ the Sobolev space of pairs of functions (\ie, $X=(x,y)^T$) with finite Sobolev norm
$$\norm{X}^2=\frac{1}{2}\|X\|^2_{L^2(\Omega)}+E_D(X).$$
In Plateau's problem it is vital to consider boundary values of mappings $X\in H^1(\Omega,\Real^2)$. This is normally done by considering the \emph{trace operator}, $T:H^1(\Omega,\Real^2)\too L^2(\partial \Omega, \Real^2)$, that extends the boundary operator, $TX=X\vert_{\partial\Omega}$, defined on  mappings $X$ which are continuous on $\bar \Omega$, to the entirety of $H^1(\Omega,\Real^2)$  (see \eg, Theorem 1.6.6 in \cite{brenner2007mathematical}). We are now ready to define the set of admissible mappings in Plateau's variational problem \cite{dierkes2010minimal}:
\begin{definition}\label{C}
The admissible function set  $\C(\Omega,\partial \T)$ is defined as follows:
\begin{enumerate}[(i)]
\item \label{e:c_cond1}
$X\in H^1(\Omega,\Real^2) \cap C(\bar \Omega)$.
\item \label{e:c_cond2} $TX$ is a homeomorphism between the boundaries $\partial \Omega$ and $\partial \T$.
\item  \label{e:c_cond3} $TX$ takes three fixed, positively oriented points $p_1,p_2,p_3\in\partial \Omega$ to the corners of the triangle $c_1,c_2,c_3 \in \T$.
\end{enumerate}

\end{definition}

\begin{wrapfigure}[14]{r}{0.3\columnwidth}
\vspace{-13pt}\hspace{-0pt}
\includegraphics[width=0.3\columnwidth]{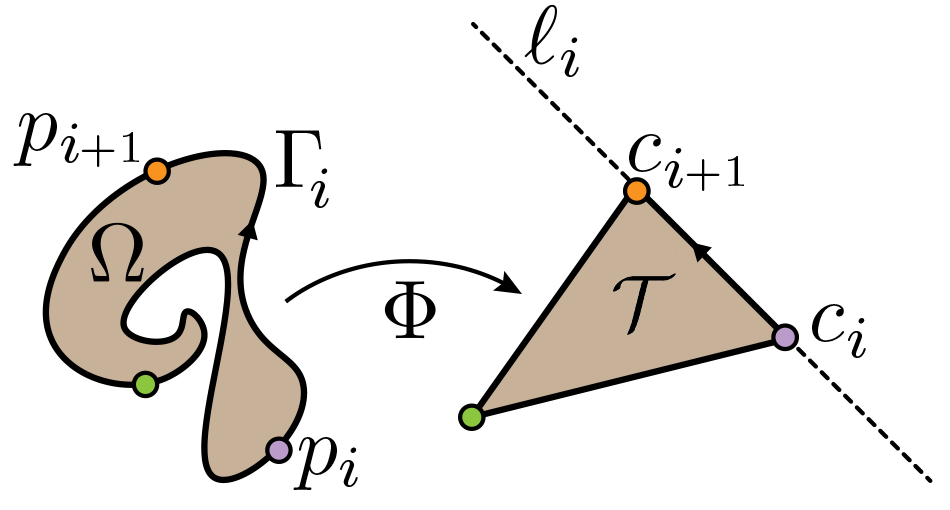}
\caption{The admissible set $\C$ and its relaxation using unordered infinite line constraints, $\C^*$.}\label{fig:setup}
\end{wrapfigure}
The unique minimizer of \eqref{e:plateau} is the  unique Riemann mapping $\Phi:\Omega\too \T$ which satisfies condition \eqref{e:c_cond3}.

We will relax \eqref{e:plateau} by relaxing the homeomorphic condition \eqref{e:c_cond2} in $\C(\Omega,\partial\T)$. Let $\Gamma_i\subset \partial \Omega$, $i=1,2,3$, denote the closed boundary arc connecting $p_i \too p_{i+1}$ (for $i=3$ set $p_4=p_1 $ and $c_4=c_1 $), see Figure \ref{fig:setup}. Then, consider the relaxed admissible mapping space:
\begin{definition}\label{C*}
The relaxed set of admissible mappings, $\C^*(\Omega,\partial\T)$, is defined to be the closure in $H^1(\Omega,\Real^2)$ of the mappings satisfying the following conditions:
\begin{enumerate}[(i)]
\item \label{e:c_cond1}
$X\in H^1(\Omega,\Real^2)$.
\item \label{e:c_cond2} $TX\in C(\partial\Omega,\Real^2)$.
\item  \label{e:c_cond2_relaxed} $a_i^T TX(p) + b_i =0 ,\qquad  \forall p\in \Gamma_i,\quad  \forall i=1,2,3$.
\end{enumerate}
where the infinite line $\ell_i=\set{Z\in\Real^2 \, \vert \,  a_i^T Z+b_i=0}$ supports and infinitely extends the edge $[c_i, c_{i+1}]$ in the triangle $\T$.
 \end{definition}
Figure \ref{fig:setup} illustrates one of the lines $\ell_i$. Note that condition \eqref{e:c_cond2_relaxed} only requires images of points in $\partial\Omega$ to lie on the respective lines $\ell_i$, and nothing prevents the boundary map $TX$ from being non-injective or non-surjective onto $\partial\T$. Also note that we now only require $X$ to be continuous on the boundary.

The first main result of this paper claims that the relaxation
\begin{subequations}\label{e:plateau_relaxed}
\begin{align}
\min_{X } & \quad E_D(X) \\
\mathrm{s.t.} & \quad X\in \C^*(\Omega,\T)
\end{align}
\end{subequations}
is \emph{tight}, that is,
\begin{theorem}\label{thm:main}
The relaxed Plateau's problem \eqref{e:plateau_relaxed} has the Riemann map $\Phi:\Omega \too \T$  satisfying $\Phi(p_i)=c_i$, $i=1,2,3$, as a unique minimizer.
\end{theorem}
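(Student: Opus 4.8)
The plan is to establish tightness in two movements: first show that $\Phi \in \C^*(\Omega,\T)$ and is a critical point of $E_D$ over the affine space $\C^*(\Omega,\T)$, and second show that any minimizer of $E_D$ over $\C^*(\Omega,\T)$ must in fact be harmonic, conformal, and coincide with $\Phi$. For the first part, recall that the Riemann map $\Phi$ is harmonic (its coordinates are harmonic since it is holomorphic), lies in $H^1(\Omega,\Real^2)$ by standard regularity for Lipschitz domains, extends continuously to $\bar\Omega$, and carries each arc $\Gamma_i$ onto the segment $[c_i,c_{i+1}]\subset \ell_i$; hence it satisfies conditions (i)--(iii) of Definition \ref{C*} and so $\Phi\in\C^*(\Omega,\T)$. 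Note $\C^*(\Omega,\T)$ is a closed affine subspace of $H^1$: it is the closure of $\Phi + \mathcal{V}$ where $\mathcal{V}$ is the linear space of $H^1$ maps whose trace is continuous and whose trace on each $\Gamma_i$ lies on the \emph{line through the origin parallel to} $\ell_i$ (i.e. $a_i^T TY(p)=0$ on $\Gamma_i$).

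The crux is then a first-variation argument: I want to show $\Phi$ minimizes $E_D$ over this affine space, and since $E_D$ is convex and coercive on the affine space, it suffices to check $\frac{d}{dt}\big|_{t=0} E_D(\Phi + tY) = \int_\Omega \nabla\Phi : \nabla Y = 0$ for all $Y$ in the tangent space $\mathcal{V}$. Because $\Phi$ is harmonic, Green's identity formally gives $\int_\Omega \nabla\Phi:\nabla Y = \int_{\partial\Omega} (\partial_n\Phi)\cdot Y$, so I need the normal-derivative boundary term to vanish against every admissible $Y$. Here is where the specific geometry enters: on each arc $\Gamma_i$, the image $\Phi(\Gamma_i)$ lies on the line $\ell_i$, and conformality of $\Phi$ forces $\partial_n\Phi$ along $\Gamma_i$ to be \emph{perpendicular} to that line (the conformal map sends the boundary-tangent direction to the direction of $\ell_i$, and the normal direction to the orthogonal one, up to scaling by $|\Phi'|$). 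Meanwhile any $Y\in\mathcal{V}$ has $TY$ along $\Gamma_i$ constrained to the direction \emph{parallel} to $\ell_i$. Thus $(\partial_n\Phi)\cdot Y = 0$ pointwise on $\partial\Omega$, the boundary integral vanishes, and $\Phi$ is the minimizer. The integration-by-parts needs justification at the level of $H^1$ traces on a Lipschitz domain — one route is to approximate $\Omega$ from inside by smooth domains, or to invoke a trace/Green formula valid for harmonic $H^1$ functions with $L^2$ boundary data; I expect this technical point, together with carefully stating in what sense $\partial_n\Phi$ exists and the orthogonality holds $\mathcal{H}^1$-a.e.\ on each arc, to be the main obstacle.

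For uniqueness: $E_D$ restricted to the affine space $\C^*(\Omega,\T)$ is strictly convex modulo the additive constants and the kernel of the Dirichlet form — but constants are excluded by condition (iii) pinning the boundary traces onto three non-concurrent lines, which forces $TX$ (hence, by harmonicity of the minimizer, $X$ itself) to be determined. Concretely, if $X_1,X_2$ are both minimizers then $X_1-X_2\in\mathcal{V}$ and $\int_\Omega |\nabla(X_1-X_2)|^2 = 0$ by the parallelogram law and minimality, so $X_1-X_2$ is constant; its trace lies on all three lines $\set{a_i^TZ=0}$ simultaneously, and since the three lines $\ell_i$ bound a nondegenerate triangle the only common point of the three parallel-through-origin lines is $0$, giving $X_1=X_2$. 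The last step is to identify this unique minimizer with $\Phi$: the minimizer is harmonic (Euler--Lagrange equation is $\Delta X=0$ in $\Omega$ from variations with compactly supported $Y$), its boundary trace maps each $\Gamma_i$ into $\ell_i$, and a standard argument — harmonic, maps the boundary into the boundary of a convex region $\T$, degree/winding-number considerations or the reflection principle across the lines $\ell_i$ — upgrades this to: the trace is monotone on each arc, hence a homeomorphism onto $\partial\T$ fixing the $p_i\mapsto c_i$, so $X\in\C(\Omega,\partial\T)$ and by the classical characterization $X=\Phi$. I would invoke the already-stated fact that the minimizer of \eqref{e:plateau} is this $\Phi$ to close the loop once I have shown the relaxed minimizer lies in $\C(\Omega,\partial\T)$.
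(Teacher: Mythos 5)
Your architecture is genuinely different from the paper's. You propose to verify that $\Phi$ is a critical point of $E_D$ on the affine set $\C^*(\Omega,\partial\T)$ via a first-variation/boundary-orthogonality computation ($\partial_n\Phi$ parallel to $a_i$ versus $TY$ parallel to $\ell_i$ on each $\Gamma_i$), and to get uniqueness from strict convexity modulo constants, constants being killed by the three line constraints. The paper never computes a first variation at $\Phi$: it proves coercivity of $E_D$ on the variation space $\V^*(\Omega,\partial\T)$ (Lemma \ref{lem:ed_coer}), deduces uniqueness, and proves minimality by the chain $E_D(X)\geq E_A(X)\geq\area{\T}=E_D(\Phi)$, where the middle inequality comes from a winding-number/degree argument showing that every smooth admissible map covers $\T$ (Lemmas \ref{lem:lower_bound}, \ref{lem:Area}) plus the density Lemma \ref{lem:density}. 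Your uniqueness argument is correct and uses the same linear-algebra fact as the invertibility of the matrix $A$ in Lemma \ref{lem:ed_coer}; the issue is the minimality step.

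That step --- $\ip{\Phi,Y}_D=0$ for all admissible variations $Y$ --- is exactly what you defer, and for the theorem as stated (a general bounded Lipschitz $\Omega$) it is not a routine technicality but the crux. The Neumann trace $\partial_n\Phi$ is in general only an element of $H^{-1/2}(\partial\Omega)$: near corners and rough boundary portions $|\Phi'|$ can blow up, and Schwarz reflection cannot be used to get smoothness up to an arc of $\Gamma_i$, because reflection requires the \emph{domain} arc to be straight or analytic, not just the image arc; so the claimed pointwise identity ``$(\partial_n\Phi)\cdot TY=0$ a.e.'' is not established, and even if it were, your variations $Y$ are merely $H^1$ with continuous trace, so one must justify the Green identity as a pairing rather than a pointwise product. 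A plausible repair is to use the Cauchy--Riemann equations to trade the normal derivative of $\Phi$ for the tangential derivative of its trace: $\int_\Omega\nabla\Phi:\nabla Y$ becomes a boundary pairing of $TY$ with the $90^\circ$ rotation of $\partial_\tau(T\Phi)$, and since $T\Phi$ traverses $[c_i,c_{i+1}]$ monotonically, $\partial_\tau(T\Phi)$ is a vector measure parallel to $\ell_i$ on $\Gamma_i$, whose rotation is parallel to $a_i$ and hence annihilates $TY$; but this Stokes-type identity on a Lipschitz domain under only these hypotheses still needs an exhaustion/limiting argument you have not supplied. Finally, your fallback ``second movement'' (show any relaxed minimizer has a monotone boundary trace onto $\partial\T$, hence lies in $\C(\Omega,\partial\T)$ and equals $\Phi$) is redundant if the first-variation argument works, and as an independent route it is essentially the entire content of tightness: ruling out traces that slide along the infinite lines $\ell_i$ or fail to be monotone is precisely what the paper's degree-theoretic Lemma \ref{lem:lower_bound} is engineered to handle, and a pointer to ``degree/winding-number considerations or the reflection principle'' does not carry it.
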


\begin{wrapfigure}[4]{r}{0.25\columnwidth}
\vspace{-13pt}\hspace{-5pt}
\includegraphics[width=0.25\columnwidth]{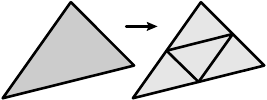}
\end{wrapfigure}
In the second part of the paper we utilize Theorem \ref{thm:main} to show that a piecewise-linear FEM approximation to the minimum of \eqref{e:plateau_relaxed} converges in the $H^1$ norm to the Riemann map under refinement of the triangulation $\M$. Refinement is a sequence of \emph{regular triangulations} $\M_h$ triangulating a polygonal domain $\Omega$ where the maximal edge size $h\too 0$. By regular triangulation we mean that all angles of the triangulations are in some interval $[0+\eps, \pi-\eps]$ for some constant $\eps>0$ (see also \cite{ciarlet2002finite}, p.~124).  One simple subdivision rule that preserves regularity of triangulation is the $1\too 4$ shown in the inset. We further show that if all $\M_h$ are 3-connected and Delaunay (\ie, sum of opposite angles is less than $\pi$) and $\T$ is an Euclidean orbifold, then the convergence is also uniform. Such triangulation families can be computed efficiently by the incremental Delaunay algorithm, for example. 

Let $\Lambda_h\subset H^1(\Omega,\Real^2)$ be the finite dimensional linear space of piecewise-linear continuous functions defined over the triangulation $\M_h$. The Ritz methods for approximating the solution of \eqref{e:plateau_relaxed} is
\begin{subequations}\label{e:plateau_relaxed_fem}
\begin{align}\label{e:plateau_relaxed_fem_E}
\min_{X } & \quad E_D(X) \\ \label{e:plateau_relaxed_fem_admissible}
\mathrm{s.t.} & \quad X\in \C^*(\Omega,\partial \T)\cap \Lambda_h
\end{align}
\end{subequations}
This is a finite-dimensional, linearly-constrained strictly convex quadratic optimization problem (strict convexity follows from Lemma \ref{lem:ed_coer} below) and is uniquely solved via a sparse linear system (the Lagrange multipliers equation). Let $\Psi_h$ denote this solution. We prove:
\begin{theorem}\label{thm:convergence}
Let $\Omega\subset \Real^2$ be a simply connected polygonal domain, $\T\subset \Real^2$ a triangle, $\Phi:\Omega \too \T$ the Riemann mapping satisfying $\Phi(p_i)=c_i$, $i=1,2,3$, and $\M_h$ a sequence of regular triangulations with maximal edge length $h$. Then the solution $\Psi_h$ of \eqref{e:plateau_relaxed_fem} satisfies: $$\lim_{h\too 0}\norm{\Psi_h-\Phi}=0.$$
Furthermore, if all $M_h$ are 3-connected Delaunay and $\T$ is equilateral or right-angled isosceles the convergence is also uniform. 
\end{theorem}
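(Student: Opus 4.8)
The plan is to prove the $H^1$-convergence first, using the structure of the problem as a finite-element approximation of a convex minimization with linear (affine) constraints, and then bootstrap to uniform convergence in the Delaunay/orbifold case via a maximum principle argument combined with the $H^1$ bound.

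\textbf{Step 1: Céa-type lemma for the constrained problem.} Since $\Phi$ is the minimizer of $E_D$ over $\C^*(\Omega,\partial\T)$ (Theorem~\ref{thm:main}) and $\Psi_h$ is the minimizer over the subspace $\C^*(\Omega,\partial\T)\cap\Lambda_h$, and since the admissible set is affine (defined by the affine conditions $a_i^T TX(p)+b_i=0$ on the arcs $\Gamma_i$), I would show the standard quasi-optimality estimate
\begin{equation*}
\norm{\Psi_h-\Phi}^2 \;\le\; C\inf\set{\norm{Y-\Phi}^2 \;:\; Y\in \C^*(\Omega,\partial\T)\cap\Lambda_h}.
\end{equation*}
This follows because $E_D$ is a coercive (by Lemma~\ref{lem:ed_coer}) symmetric bilinear form on the linear space $V = \{X \in H^1 : TX|_{\Gamma_i} \text{ lies on the line } \ell_i\text{ through the origin after subtracting }b_i\}$ — more precisely one reduces to the associated homogeneous linear space by subtracting a fixed affine offset, and then $\Phi-\Psi_h$ is $E_D$-orthogonal to $\Lambda_h\cap(\text{homogeneous space})$ by the first-order optimality (Galerkin orthogonality). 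The only subtlety is that the constraints are imposed on traces on the boundary arcs, but since $\Lambda_h$ consists of continuous piecewise-linear functions and the constraint lines $\ell_i$ pass through vertices only when the polygonal boundary vertices map appropriately, the discrete admissible set is a genuine subspace of the continuous one; here the polygonality of $\Omega$ and the fact that the $p_i$ are chosen among the triangulation vertices is what makes $\C^*\cap\Lambda_h$ nonempty and nested.

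\textbf{Step 2: Density / approximation.} It remains to bound $\inf_{Y}\norm{Y-\Phi}$. The natural candidate is $Y = I_h\Phi$, the nodal interpolant of $\Phi$ onto $\M_h$. One must check $I_h\Phi \in \C^*(\Omega,\partial\T)\cap\Lambda_h$: on each boundary arc $\Gamma_i$, $\Phi$ maps into the line $\ell_i$ (since the true Riemann map sends $\Gamma_i$ to the edge $[c_i,c_{i+1}]\subset\ell_i$), and the nodal interpolant of a function valued in an affine line is again valued in that line (convex combination of points on a line), so the constraint is preserved — this is exactly why the relaxation was designed with lines rather than segments. Then standard finite-element interpolation theory gives $\norm{I_h\Phi-\Phi}\le C h\,|\Phi|_{H^2}$ \emph{if} $\Phi\in H^2$; but $\Phi$ need not be $H^2$ up to the boundary near the corners of $\T$ or reentrant corners of $\Omega$. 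The robust fix is to use density of $C^\infty(\bar\Omega)\cap\C^*$ functions in $\C^*$ together with a diagonal argument: approximate $\Phi$ in $H^1$ by a smooth admissible $\Phi_\delta$, interpolate that, and let $h\to 0$ then $\delta\to 0$. Here one needs that $\C^*$ is the $H^1$-closure of nice functions — which is precisely how Definition~\ref{C*} is phrased — so the approximation is essentially built into the definition.

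\textbf{Step 3: Uniform convergence in the Delaunay orbifold case.} This is the part I expect to be the main obstacle. The idea: when $\M_h$ is Delaunay, the discrete Dirichlet energy has nonnegative cotangent weights, so the discrete harmonic map $\Psi_h$ satisfies a discrete maximum principle and, in the orbifold setting, $\Psi_h$ is the Orbifold-Tutte embedding, which is known (cited) to be bijective. One then lifts everything to the orbifold's universal cover (the plane), where $\Psi_h$ becomes a genuinely discrete harmonic (piecewise-linear, cotan-weighted) map and $\Phi$ lifts to a harmonic homeomorphism. Uniform convergence should follow from: (a) the $H^1$-convergence of Step 1–2, (b) equicontinuity of the family $\{\Psi_h\}$ — obtained from the discrete maximum principle controlling oscillation of $\Psi_h$ on small balls in terms of boundary oscillation, which is controlled because the boundary constraint forces $\Psi_h|_{\partial\Omega}$ to move monotonically along $\partial\T$ once bijectivity is known — and (c) uniform boundedness (trivially, images lie in $\T$). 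By Arzelà–Ascoli every subsequence has a uniformly convergent sub-subsequence, whose limit agrees with $\Phi$ a.e. by the $H^1$-convergence, hence equals $\Phi$; so the whole sequence converges uniformly. The delicate point is establishing equicontinuity uniformly in $h$ up to the boundary, including near the corners $p_i$ and near reentrant corners of $\Omega$; here one invokes the Delaunay property to get a genuine discrete mean-value/maximum principle and the known regularity of Tutte-type embeddings, possibly combined with a barrier argument at the corners using the explicit conformal model of the orbifold corner. I would treat the interior equicontinuity via the discrete maximum principle and Harnack-type estimates for cotan-Laplacians on regular triangulations, and handle the boundary via reflection across the lines $\ell_i$ (legitimate precisely because the orbifold corners are $\pi/2,\pi/2,\pi$ or $\pi/3,\pi/3,\pi/3$, i.e.\ the reflections generate a discrete group), reducing boundary equicontinuity to interior equicontinuity on the reflected complex.
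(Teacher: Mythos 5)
Your Step 1 matches the paper (C\'ea's lemma, the paper's Lemma~\ref{lem:cea}), and the skeleton of Step 2 (interpolate a smooth admissible approximant, then diagonalize) is also the paper's. But Step 2 has a genuine gap: you claim the needed density of smooth admissible functions is ``essentially built into the definition'' of $\C^*(\Omega,\partial\T)$. It is not. Definition~\ref{C*} closes up functions that are merely $H^1$ with a continuous trace lying on the lines $\ell_i$; such approximants are in general neither continuous on $\bar\Omega$ nor in $W^2_2$, so you can neither form their nodal interpolant nor apply interpolation error estimates to them. What is actually needed is a sequence $\Phi_\epsilon\in\C^*(\Omega,\partial\T)\cap C^\infty(\bar\Omega,\Real^2)$ with $\norm{\Phi_\epsilon-\Phi}\to 0$, and this is a real lemma (the paper's Lemma~\ref{lem:smooth}): one freezes $\Phi$ at its values at the corners of $\partial\Omega$ by a cutoff at scale $\epsilon$, uses Schwarz reflection across the straight boundary edges to extend $\Phi$ smoothly past $\partial\Omega$ away from the corners (so the constraints $a_i^TTX+b_i=0$ are preserved, since the modification interpolates along the lines $\ell_i$), and then estimates the $H^1$ error of the cutoff terms. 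Without this construction your diagonal argument has nothing to interpolate.

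Step 3 is where your plan would fail as written. You want equicontinuity of $\{\Psi_h\}$ up to $\bar\Omega$ and Arzel\`a--Ascoli, justifying boundary equicontinuity by monotonicity of $T\Psi_h$ along $\partial\T$ and interior equicontinuity by ``Harnack-type estimates for cotan-Laplacians.'' Monotonicity of the boundary maps together with the three fixed points does \emph{not} give equicontinuity: a monotone family can concentrate an entire triangle edge onto an arbitrarily short boundary arc and develop a jump in the limit. What excludes this is the uniform Dirichlet energy bound via the Courant--Lebesgue lemma, which you never invoke; and the discrete maximum principle only compares the global sup with the boundary sup --- it does not control oscillation on small balls uniformly in $h$, so the interior equicontinuity and the corner barriers are substantial unproved ingredients. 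The paper avoids all of this by a comparison argument: introduce $\Phi_h$, the discrete harmonic function in $\Lambda_h$ with Dirichlet data $\Phi(v)$ at boundary vertices; Ciarlet's uniform convergence theorem gives $\Phi_h\to\Phi$ in $C(\bar\Omega,\Real^2)$ using that $\Phi\in W^1_p$ for some $p>2$ (power-type corner singularities); the Courant--Lebesgue lemma plus the trace theorem (to identify the limit as $T\Phi$) gives $T\Psi_h\to T\Phi$ uniformly; and then the Delaunay discrete maximum principle bounds $\norm{\Phi_h-\Psi_h}_{C(\bar\Omega,\Real^2)}$ by $\norm{T\Phi_h-T\Psi_h}_{C(\partial\Omega,\Real^2)}\to 0$. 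If you want to salvage your route, you would at minimum have to replace the monotonicity argument by Courant--Lebesgue and supply genuine uniform-in-$h$ local oscillation estimates for discrete harmonic maps, which is considerably harder than the paper's comparison with $\Phi_h$.
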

In case the triangle $\T$ is one of the Euclidean orbifolds, that is an equilateral triangle or right-angled isosceles triangle then \eqref{e:plateau_relaxed_fem} is exactly the Orbifold-Tutte algorithm \cite{Aigerman:2015:OTE:2816795.2818099}. If $\M_h$ is Delaunay and $\T$ is an orbifold it is proved in \cite{Aigerman:2015:OTE:2816795.2818099} that $\Psi_h$ is bijective. Since it also converges uniformly by Theorem \ref{thm:convergence} we can approximate the Riemann mapping between two polygons:
\begin{corollary}\label{cor:convergence_2}
Let $\Omega,\Omega'\subset \Real^2$ be two simply connected polygonal domains, $\T\subset \Real^2$ an equilateral or right-angled isosceles triangle, $\Xi:\Omega \too \Omega'$ the unique Riemann mapping satisfying $\Xi(p_i)=p'_i$, $i=1,2,3$, and $\M_h, \M'_h$ two sequences of 3-connected Delaunay regular triangulations of $\Omega,\Omega'$ (resp.)~with maximal edge length $\leq h$, and let $\Psi_h,\Psi'_h$ be the discrete conformal maps from these triangulations to $\T$. Then, $(\Psi'_h)^{-1}\circ\Psi_h$ converges to $\Xi$ uniformly.
\end{corollary}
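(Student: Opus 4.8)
The plan is to derive Corollary~\ref{cor:convergence_2} from Theorem~\ref{thm:convergence} by a standard compactness-plus-uniqueness argument, with the only real work being to upgrade uniform convergence of the two forward maps into uniform convergence of the composition $(\Psi'_h)^{-1}\circ\Psi_h$, which involves inverting one of the converging sequences. First I would record the setup: by Theorem~\ref{thm:convergence}, $\Psi_h\too\Phi:\Omega\too\T$ uniformly on $\bar\Omega$ and $\Psi'_h\too\Phi':\Omega'\too\T$ uniformly on $\bar\Omega'$, where $\Phi,\Phi'$ are the Riemann mappings taking $p_i\mapsto c_i$ and $p'_i\mapsto c_i$ respectively; moreover each $\Psi_h,\Psi'_h$ is a homeomorphism onto $\T$ by the bijectivity result of \cite{Aigerman:2015:OTE:2816795.2818099} (here one uses the Delaunay and $3$-connectivity hypotheses and that $\T$ is an orbifold). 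Since $\Phi,\Phi'$ extend to homeomorphisms of the closures, the composition $\Xi:=(\Phi')^{-1}\circ\Phi$ is a homeomorphism $\bar\Omega\too\bar\Omega'$, and it is the Riemann mapping with $\Xi(p_i)=p'_i$ by uniqueness of Riemann maps with three matched boundary points; so the target object is exactly $\Xi$.

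The key step is showing $(\Psi'_h)^{-1}\circ\Psi_h\too(\Phi')^{-1}\circ\Phi$ uniformly on $\bar\Omega$. I would split the error: for $q\in\bar\Omega$,
\begin{equation*}
\abs{(\Psi'_h)^{-1}(\Psi_h(q)) - (\Phi')^{-1}(\Phi(q))} \le \abs{(\Psi'_h)^{-1}(\Psi_h(q)) - (\Psi'_h)^{-1}(\Phi(q))} + \abs{(\Psi'_h)^{-1}(\Phi(q)) - (\Phi')^{-1}(\Phi(q))}.
\end{equation*}
For the second term, set $w=\Phi(q)$ ranging over $\T$; I claim $(\Psi'_h)^{-1}\too(\Phi')^{-1}$ uniformly on $\bar\T$, which handles it. For the first term, I need an \emph{equicontinuity} (uniform modulus of continuity) statement for the family $\set{(\Psi'_h)^{-1}}$ on $\bar\T$; combined with $\Psi_h(q)\too\Phi(q)$ uniformly, this forces the first term to $0$ uniformly. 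So the corollary reduces to two lemmas about the inverse maps: (a) $(\Psi'_h)^{-1}\too(\Phi')^{-1}$ uniformly on $\bar\T$, and (b) the family $\set{(\Psi'_h)^{-1}}_h$ is uniformly equicontinuous on $\bar\T$.

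Both (a) and (b) follow from a single classical fact: if $f_h:\bar\T\too\bar\Omega'$ are homeomorphisms converging uniformly to a homeomorphism $f$ (here $f_h=(\Psi'_h)^{-1}$? — not quite, since we only know $\Psi'_h\too\Phi'$; this is the crux), then $f_h^{-1}\too f^{-1}$ uniformly and the $f_h^{-1}$ are equicontinuous. The honest obstacle, therefore, is the \emph{inversion step}: deducing uniform convergence of the inverses $(\Psi'_h)^{-1}\too(\Phi')^{-1}$ from uniform convergence of $\Psi'_h\too\Phi'$. In general uniform convergence of homeomorphisms does not imply uniform convergence of inverses, but it does when the domains are compact and the limit is a homeomorphism of compact Hausdorff spaces — the standard argument: suppose not, then there are $w_h\in\bar\T$ with $\abs{(\Psi'_h)^{-1}(w_h)-(\Phi')^{-1}(w_h)}\ge\delta$; pass to a subsequence with $w_h\too w_*\in\bar\T$ and $(\Psi'_h)^{-1}(w_h)\too z_*\in\bar\Omega'$ (compactness of $\bar\Omega'$); then $w_h=\Psi'_h((\Psi'_h)^{-1}(w_h))\too\Phi'(z_*)$ by uniform convergence $\Psi'_h\too\Phi'$ together with continuity of $\Phi'$, so $\Phi'(z_*)=w_*$, i.e.\ $z_*=(\Phi')^{-1}(w_*)$; but also $(\Phi')^{-1}(w_h)\too(\Phi')^{-1}(w_*)=z_*$ by continuity of $(\Phi')^{-1}$, contradicting $\abs{(\Psi'_h)^{-1}(w_h)-(\Phi')^{-1}(w_h)}\ge\delta$. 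Equicontinuity of $\set{(\Psi'_h)^{-1}}$ on the compact set $\bar\T$ then comes for free from continuity of each member plus uniform convergence to the (uniformly continuous) limit $(\Phi')^{-1}$, or can be folded into the same subsequence argument. Assembling: the two-term split above, with $\abs{\Psi_h(q)-\Phi(q)}\too0$ uniformly feeding the equicontinuity bound on the first term and (a) bounding the second, yields $(\Psi'_h)^{-1}\circ\Psi_h\too(\Phi')^{-1}\circ\Phi=\Xi$ uniformly on $\bar\Omega$. I expect the inversion step to be the main obstacle, but it is purely point-set topology and should occupy only a short paragraph; everything else is bookkeeping on top of Theorem~\ref{thm:convergence} and the bijectivity result of \cite{Aigerman:2015:OTE:2816795.2818099}.
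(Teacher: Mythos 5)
Your proposal is correct and follows essentially the same route as the paper: a triangle-inequality split of $(\Psi'_h)^{-1}\circ\Psi_h-\Xi$, reduction to uniform convergence of the inverses $(\Psi'_h)^{-1}\too(\Phi')^{-1}$ on $\bar\T$, and a uniform-continuity/composition argument, with $\Xi=(\Phi')^{-1}\circ\Phi$ identified by uniqueness. The only differences are cosmetic: you insert the middle term $(\Psi'_h)^{-1}(\Phi(q))$ instead of the paper's $(\Phi')^{-1}(\Psi_h(q))$ (so you invoke equicontinuity of the inverse family where the paper only needs continuity of the fixed map $(\Phi')^{-1}$), and you spell out the compactness argument for uniform convergence of inverses, which the paper asserts without proof.
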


\section{Proof of tightness (Theorem \ref{thm:main}) }

In this section we prove Theorem \ref{thm:main}, that is show that problem \eqref{e:plateau_relaxed} has a unique solution and this solution is the conformal map $\Phi:\Omega\too \T$.
We start with a Lemma showing that $E_D$ restricted to $\C^*(\Omega,T)$ is \emph{coercive}. Uniqueness of the minimizer will then follow. Let us denote by $\V^*(\Omega,\partial\T)$ the vector space which consists the linear part of $\C^*(\Omega,\partial\T)$. That is,
\begin{equation}
\V^*(\Omega,\partial\T)=\set{X-Y \ \Big\vert \  X,Y\in \C^*(\Omega,\partial\T)}.
\end{equation}

\begin{lemma}\label{lem:ed_coer} The Dirichlet energy satisfies $E_D(X)\geq c\norm{X}^2$ for some constant $c>0$, and  any $X\in \V^*(\Omega,\partial\T)$. The constant $c$ depends only on $\Omega$, $\T$ and the choice of the three points $p_1,p_2,p_3\in\partial\Omega$.  \end{lemma}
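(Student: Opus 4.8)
The plan is to reduce the lemma to a Poincar\'e--Friedrichs type inequality on $\V^*(\Omega,\partial\T)$ and then to prove that inequality by a compactness argument. Writing $\norm{X}^2=\tfrac12\norm{X}_{L^2(\Omega)}^2+E_D(X)$ and $\int_\Omega\abs{\nabla X}^2=2E_D(X)$, the desired estimate $E_D(X)\ge c\norm{X}^2$ is, by elementary algebra, equivalent to
\begin{equation}\label{e:poincare_plan}
\norm{X}_{L^2(\Omega)}^2\ \le\ C\int_\Omega\abs{\nabla X}^2\qquad\text{for all }X\in\V^*(\Omega,\partial\T),
\end{equation}
with the correspondence $c=1/(1+C)$. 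So the first step is to reduce to establishing \eqref{e:poincare_plan}.

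Next I would describe $\V^*(\Omega,\partial\T)$ concretely. The conditions defining $\C^*(\Omega,\partial\T)$ before closure are affine in $X$, and they are preserved under $H^1$-limits: the trace operator $T:H^1(\Omega,\Real^2)\too L^2(\partial\Omega,\Real^2)$ is bounded, so the line constraint $a_i^T TX+b_i=0$ a.e.\ on $\Gamma_i$ survives passage to the $L^2$-limit of the traces. Hence $\C^*$ is a closed affine subspace, $\V^*$ is the closed linear subspace of $H^1(\Omega,\Real^2)$ parallel to it, and every $W\in\V^*$ satisfies $a_i^T\,TW=0$ a.e.\ on $\Gamma_i$ for $i=1,2,3$ (the offsets $b_i$ cancel in a difference). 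Moreover the subspace $V_0$ of maps $W\in H^1(\Omega,\Real^2)$ with $TW$ continuous on $\partial\Omega$ and $a_i^T TW\equiv 0$ on $\Gamma_i$ is dense in $\V^*$; since $E_D$ and $\norm{\cdot}$ are continuous on $H^1(\Omega,\Real^2)$, it then suffices to prove \eqref{e:poincare_plan} for $W\in V_0$.

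The heart of the argument is a contradiction using compactness. If \eqref{e:poincare_plan} failed on $V_0$, there would be $W_n\in V_0$ with $\norm{W_n}_{L^2(\Omega)}=1$ and $\int_\Omega\abs{\nabla W_n}^2\too 0$. Since $\Omega$ is a bounded Lipschitz domain, Rellich--Kondrachov yields a subsequence converging in $L^2(\Omega,\Real^2)$ to some $W$; combined with $\nabla W_n\too 0$ in $L^2$, the sequence is Cauchy in $H^1$, so $W_n\too W$ in $H^1$ and $\nabla W\equiv 0$. As $\Omega$ is connected, $W$ is a constant vector, and $\norm{W}_{L^2(\Omega)}=1$ forces $W\neq 0$. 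On the other hand, continuity of the trace gives $TW_n\too TW$ in $L^2(\partial\Omega)$, hence $a_i^T\,TW=0$ a.e.\ on $\Gamma_i$; since each arc $\Gamma_i$ has positive length and $W$ is constant, this forces $a_i^T W=0$ for $i=1,2,3$. Because $\T$ is a nondegenerate triangle, its edges $[c_1,c_2]$ and $[c_2,c_3]$ are non-parallel, so the normals $a_1,a_2$ are linearly independent in $\Real^2$ and therefore $W=0$ --- a contradiction. This establishes \eqref{e:poincare_plan} on $V_0$, hence on $\V^*$, with a constant $C$ (and thus $c$) depending only on $\Omega$ (through the Rellich and trace constants), on $\T$ (through the normals $a_i$), and on $p_1,p_2,p_3$ (through the arcs $\Gamma_i$).

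I expect the only delicate points to be bookkeeping: checking that the relaxed line constraint genuinely passes to the $H^1$-closure defining $\C^*$ and $\V^*$ --- which is precisely why the trace is taken with values in $L^2(\partial\Omega)$, where it is continuous --- and citing Rellich--Kondrachov compactness for a bounded Lipschitz $\Omega$. The remaining ingredients (the algebraic reduction to \eqref{e:poincare_plan}, the fact that $\nabla W\equiv 0$ forces $W$ constant on a connected domain, and the linear independence of two edge-normals of a nondegenerate triangle) are elementary. As an alternative to the contradiction step, \eqref{e:poincare_plan} also follows at once from the generalized Poincar\'e inequality: on a bounded connected Lipschitz domain, $\norm{\nabla X}_{L^2(\Omega)}$ together with any continuous seminorm on $H^1(\Omega,\Real^2)$ that vanishes on a constant only when that constant is zero dominates $\norm{X}$; here $\sum_{i=1}^3\norm{a_i^T TX}_{L^2(\Gamma_i)}$ is such a seminorm, and it vanishes identically on $\V^*(\Omega,\partial\T)$.
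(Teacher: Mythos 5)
Your proof is correct, but it takes a genuinely different route from the paper's. You reduce the claim to a Poincar\'e--Friedrichs inequality on $\V^*(\Omega,\partial\T)$ and prove that inequality non-constructively, by a Rellich--Kondrachov compactness/contradiction argument (or, in your alternative, by invoking the generalized Poincar\'e inequality with the boundary seminorm $\sum_i\norm{a_i^T TX}_{L^2(\Gamma_i)}$). The paper instead argues directly and quantitatively: it splits $X=(X-x)+x$ about its mean $x$, controls $\norm{X-x}_{L^2(\partial\Omega)}$ by $E_D(X)^{1/2}$ via the mean-zero Poincar\'e and trace inequalities, and then recovers $|x|$ from the two line constraints on $\Gamma_1,\Gamma_2$ through the invertible $2\times 2$ matrix with rows $\sqrt{\length{\Gamma_i}}\,a_i^T$, yielding an explicit constant $c$ in terms of $C_{\text{Poincar\'e}}$, $C_{\text{trace}}$, $\norm{A^{-1}}_{2,2}$ and $\mathrm{vol}(\Omega)$. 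Both arguments rest on the same geometric fact --- constraints along two non-parallel lines over boundary arcs of positive length annihilate constant vectors --- but your compactness route is shorter at the price of giving no explicit constant, while the paper's computation is effectively the concrete instance of the abstract seminorm-Poincar\'e statement you cite. A useful feature of your write-up is that you make explicit the point the paper leaves implicit at step $(**)$: that the line constraints, imposed before taking the $H^1$-closure in Definition \ref{C*}, survive the closure because the trace operator is bounded into $L^2(\partial\Omega,\Real^2)$, so every element of $\V^*(\Omega,\partial\T)$ satisfies $a_i^T TX=0$ a.e.\ on $\Gamma_i$; your density reduction to maps with continuous traces is not even needed for the contradiction argument, which works directly with this a.e.\ constraint.
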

\begin{proof}
Let $X\in \V^*(\Omega,\partial\T)$. Since $\norm{X}^2=E_D(X)+\norm{X}^2_{L^2(\Omega,\Real^2)}$ it is enough to show a bound of the form $$\norm{X}^2_{L^2(\Omega,\Real^2)} \leq C E_D(X),$$
for some $C>0$. Denote $\Real^2\ni x=\frac{1}{\area{\Omega}}\int_{\Omega}X$ the average of $X$. Denote $|X|_D=E_D(X)^{1/2}$. We claim that it is sufficient to show that for some $C_1>0$,
        \begin{equation} \label{e:reduced}
        |x| \leq C_1 |X|_D.
        \end{equation}

To see this, use the triangle inequality followed by Poincar\'e inequality (see \cite{leoni2009first}, Theorem 12.23; note that $\Omega$ is in particular a connected extension domain for $H^1(\Omega,\Real)$),
        \begin{align*}
        \norm{X}_{L^2(\Omega,\Real^2)} & \leq \norm{X-x}_{L^2(\Omega,\Real^2)}+\norm{x}_{L^2(\Omega,\Real^2)} \leq C_{\text{Poincar\'e}}|X|_D+ \text{vol}(\Omega)^{\frac{1}{2}}|x|  \\
        & \leq \left( C_{\text{Poincar\'e}}+  C_1\text{vol}(\Omega)^{\frac{1}{2}} \right)|X|_D
        \end{align*}
        where for the last inequality we used \eqref{e:reduced} and in the second to last inequality
        $$\norm{x}_{L^2(\Omega,\Real^2)}=\text{vol}(\Omega)^{\frac{1}{2}}|x|.$$

        We now bound $|x|$ as in \eqref{e:reduced}. Using the trace inequality and the Poincar\'e inequality yet again, we have
\begin{align}\label{e:trace}
\norm{X-x}_{L^2(\partial\Omega,\Real^2)} \leq C_{\text{trace}}\norm{X-x}\leq C_{\text{trace}}(1+C_{\text{Poincar\'{e}}})|X|_D 
        \end{align}
        The square norm of $L^2(\partial\Omega,\Real^2) $ is the sum of the squared norm over each boundary arc $\Gamma_i$, $i=1,2,3$. Denote the length of $\Gamma_i$ by $\length{\Gamma_i}$. So by omitting the last arc we obtain

        \begin{align} \nonumber
        \norm{X-x}_{L^2(\partial\Omega,\Real^2)}^2 & \geq \sum_{i=1}^2 \norm{X-x}_{L^2(\Gamma_i,\Real^2)}^2 \stackrel{(*)}{\geq} \sum_{i=1}^2 \norm{a_i^T(X-x)}_{L^2(\Gamma_i,\Real)}^2  \\
        & \stackrel{(**)}{=} \sum_{i=1}^2 \norm{a_i^Tx}_{L^2(\Gamma_i,\Real)}^2=\sum_{i=1}^2(a_i^Tx)^2 \length{\Gamma_i}   \\ \nonumber
        & = |Ax|^2
        \end{align}

        Where $(*)$ follows from pointwise application of the Cauchy-Schwartz inequality in $\Real^2$, assuming w.l.o.g. that $|a_i|=1$; $(**)$ follows from $X\in\V^*(\Omega,\partial\T)$ and condition \eqref{e:c_cond2_relaxed} in Definition \eqref{C*}, and $A\in \Real^{2\times 2}$ is the invertible matrix with rows $\sqrt{\length{\Gamma_i}}a_i^T$.
        Lastly,
        \begin{align}\label{e:last}
        |x|=|A^{-1}Ax|\leq \norm{A^{-1}}_{2,2}|A x|.
        \end{align}
        Using equations \eqref{e:trace}-\eqref{e:last}, we achieve our goal stated in \eqref{e:reduced}, where 
        $$C_1=\norm{A^{-1}}_{2,2}C_{\text{Trace}}(1+C_{\text{Poincar\'e}}).$$ 
Hence, the constant $C=(C_{\text{Poincar\'e}}+  C_1\text{vol}(\Omega)^{\frac{1}{2}})^2$ and therefore also the constant $c=c(\Omega,\partial\T)$
in the theorem formulation are dependent only on $\Omega$, $\T$ and the choice of three points $p_1,p_2,p_3\in \partial\Omega$.
 \end{proof}

An important consequence of the coercivity of $E_D$ is the uniqueness of the solution of \eqref{e:plateau_relaxed}: 
\begin{lemma}\label{uniqns} The relaxed Plateau's problem \eqref{e:plateau_relaxed} has at most a single solution. \end{lemma}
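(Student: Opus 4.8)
The plan is to exploit that the admissible set $\C^*(\Omega,\partial\T)$ is an affine subspace of $H^1(\Omega,\Real^2)$ — it is the closure of the set of maps satisfying the affine constraints (ii) and (iii) of Definition~\ref{C*}, hence convex and closed — and that the Dirichlet energy $E_D$ is a nonnegative quadratic form whose associated bilinear form $B(X,Y)=\tfrac12\int_\Omega (X_u\cdot Y_u + X_v\cdot Y_v)$ restricts to an inner product on the linear part $\V^*(\Omega,\partial\T)$ by Lemma~\ref{lem:ed_coer}. So I would argue by the standard strict-convexity argument. First I would fix the obvious point: $\C^*(\Omega,\partial\T)$ is nonempty (the Riemann map $\Phi$ lies in it, by Theorem~\ref{thm:main}; but for the purposes of this lemma it suffices that it is nonempty, which one can also see directly from an affine interpolation of the corner data), so "at most one solution" is the meaningful content.

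The core step: suppose $X_1,X_2\in\C^*(\Omega,\partial\T)$ are both minimizers, with common minimal value $m=E_D(X_1)=E_D(X_2)$. Their midpoint $\bar X=\tfrac12(X_1+X_2)$ again lies in $\C^*(\Omega,\partial\T)$ by convexity. Writing $D=X_1-X_2\in\V^*(\Omega,\partial\T)$, the parallelogram identity for the quadratic form gives
\begin{equation*}
E_D(\bar X) = \tfrac12 E_D(X_1) + \tfrac12 E_D(X_2) - \tfrac14 E_D(D) = m - \tfrac14 E_D(D).
\end{equation*}
Since $\bar X$ is admissible, $E_D(\bar X)\ge m$, forcing $E_D(D)\le 0$, hence $E_D(D)=0$. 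Now I invoke Lemma~\ref{lem:ed_coer}: since $D\in\V^*(\Omega,\partial\T)$, we have $0=E_D(D)\ge c\norm{D}^2$ with $c>0$, so $\norm{D}=0$, i.e.\ $X_1=X_2$ in $H^1(\Omega,\Real^2)$.

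The only genuinely nontrivial ingredient is coercivity on the linear part, and that is exactly what Lemma~\ref{lem:ed_coer} already supplies — without it, $E_D(D)=0$ would only force $D$ to be (componentwise) constant, and one would still need the constraints to kill the constant, which is precisely the content of that lemma. So there is no real obstacle here beyond bookkeeping: I should just be slightly careful that $\C^*$ being defined as an $H^1$-closure does not disturb convexity (it does not: the closure of a convex set is convex) nor the property that differences lie in $\V^*$ (which is how $\V^*$ is defined). The argument is three lines once Lemma~\ref{lem:ed_coer} is in hand; the midpoint/parallelogram trick is the standard route and I do not see a reason to deviate from it.
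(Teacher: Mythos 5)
Your proof is correct and takes essentially the same route as the paper: the paper restricts $E_D$ to the line through the two putative minimizers and invokes the coercivity of Lemma~\ref{lem:ed_coer} to conclude strict convexity, and your midpoint/parallelogram computation is just an explicit rendering of that same strict-convexity argument, with the same key ingredient (coercivity on $\V^*(\Omega,\partial\T)$) doing the real work.
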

\begin{proof}
Assume \eqref{e:plateau_relaxed} has two solutions $X,Y\in \C^*(\Omega,\partial\T)$. Restricting $E_D$ to the infinite line $tX+(1-t)Y$, $t\in\Real$, results in a coercive one dimensional quadratic polynomial in $t$ and hence strictly convex. Thus $X=Y$. 
\end{proof}

This Lemma implies that if the conformal map $\Phi$ is a solution to \eqref{e:plateau_relaxed} then it is unique and the relaxation is indeed tight. To show $\Phi$ is a solution we first recall that the Dirichlet energy is an upper bound of the area functional:
\begin{equation}\label{e:DA_ineq}
E_D(X)\geq E_A(X),
\end{equation}
where $$E_A(X)=\int_\Omega \abs{\det \brac{X_u\, X_v}}$$ is the area functional. This inequality can be proved using the inequality $|\det A| \leq \frac{1}{2} |A|_F^2$, where $A\in \Real^{2\times 2}$ and $\abs{\cdot}_F$ is the Frobenious norm of a matrix. When $A$ is a similarity matrix equality holds. For the conformal map $\Phi$, $\brac{\Phi_u , \Phi_v}$ is a similarity matrix everywhere in the open set $\Omega$ and therefore $$E_D(\Phi)=E_A(\Phi)=\area{\T},$$ where $\area{\T}$ denotes the area of the triangle $\T$.
To show that $\Phi$ is a solution to \eqref{e:plateau_relaxed} it is therefore enough to show the following three lemmas. 
\begin{lemma}\label{lem:lower_bound}
Every $X\in \C^*(\Omega,\partial\T)\cap C(\bar \Omega,\Real^2)\cap C^\infty(\Omega,\Real^2)$
satisfies $E_D(X)\geq \area{\T}$.
\end{lemma}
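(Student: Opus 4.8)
\section*{Proof proposal for Lemma \ref{lem:lower_bound}}

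The plan is to combine three ingredients: the pointwise inequality $E_D\geq E_A$ from \eqref{e:DA_ineq}; the classical area formula, which bounds $E_A(X)$ from below by the (unsigned) Lebesgue measure of the image $X(\Omega)$; and a purely topological fact, namely that $X(\Omega)$ must contain the open triangle $\mathrm{int}\,\T$. Chaining these gives $E_D(X)\geq E_A(X)\geq \area{X(\Omega)}\geq \area{\mathrm{int}\,\T}=\area{\T}$.

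First I would record that $X(p_i)=c_i$ for $i=1,2,3$. The point $p_i$ is the common endpoint of the arcs $\Gamma_{i-1}$ and $\Gamma_i$, so condition \eqref{e:c_cond2_relaxed} of Definition \ref{C*} forces $X(p_i)\in \ell_{i-1}\cap\ell_i$; since $\ell_{i-1}$ and $\ell_i$ carry two adjacent edges of the nondegenerate triangle $\T$, they meet in the single point $c_i$. (For $X\in\C^*$ the line constraint is a priori only an a.e.\ condition on each arc, but since here $X\in C(\bar\Omega)$ and the trace equals the boundary restriction, $TX$ is continuous, and an equality that holds a.e.\ on the nondegenerate arc $\Gamma_i$ then holds at every point of $\Gamma_i$, in particular at $p_i$.)

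Next, the topological claim. Let $\gamma:=X|_{\partial\Omega}$, a continuous loop. On the arc $\Gamma_i$ it lies in the line $\ell_i$ and runs from $c_i$ to $c_{i+1}$; contracting each of these sub-arcs within $\ell_i$, keeping endpoints fixed, to the straight segment $[c_i,c_{i+1}]$ exhibits a homotopy from $\gamma$ to $\partial\T$ inside $\Real^2\setminus\{z\}$ for any $z$ not lying on any line $\ell_i$ — in particular for any $z\in\mathrm{int}\,\T$. Hence the winding number of $\gamma$ about such a $z$ equals that of $\partial\T$, which is $\pm1\neq0$. If some $z\in\mathrm{int}\,\T$ were not in $X(\bar\Omega)$, then $X$ would be a continuous map from the topological closed disk $\bar\Omega$ (here $\Omega$ is a simply connected Jordan domain) into $\Real^2\setminus\{z\}$, forcing $\gamma$ to be null-homotopic there and its winding number about $z$ to vanish — a contradiction. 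So $\mathrm{int}\,\T\subseteq X(\bar\Omega)$; and since $X(\partial\Omega)\subseteq \ell_1\cup\ell_2\cup\ell_3$ is disjoint from $\mathrm{int}\,\T$, in fact $\mathrm{int}\,\T\subseteq X(\Omega)$. Finally, because $X\in H^1(\Omega,\Real^2)$ we have $|\det[X_u\,X_v]|\leq\tfrac12|DX|_F^2\in L^1(\Omega)$, and $X$ is $C^1$ on the open set $\Omega$ hence Lipschitz on every compact $K\Subset\Omega$; applying the area formula on each such $K$ and letting $K\uparrow\Omega$ by monotone convergence gives $E_A(X)=\int_\Omega|\det[X_u\,X_v]|=\int_{\Real^2}\#\bigl(X^{-1}(z)\cap\Omega\bigr)\,dz\geq \area{\mathrm{int}\,\T}=\area{\T}$, and then $E_D(X)\geq E_A(X)\geq\area{\T}$ by \eqref{e:DA_ineq}.

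I expect the main obstacle to be the topological step — pinning down the winding-number value and orientation conventions rigorously, and making sure the boundary behaviour of a continuous $H^1$ map on a merely Lipschitz domain really is the described loop on $\ell_1\cup\ell_2\cup\ell_3$ through $c_1,c_2,c_3$, so that the homotopy argument is legitimate. The analytic part is comparatively routine, modulo citing the area formula for $C^1$ (equivalently, locally Lipschitz) maps and the reduction to $\Omega$ via an interior exhaustion; note in particular that this route avoids having to justify Stokes' theorem for a map that is only continuous, not $C^1$, up to the boundary.
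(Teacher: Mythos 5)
Your proof is correct and follows the same overall strategy as the paper: first show by a winding-number/degree argument that the image of $X$ covers the open triangle, then bound $E_A(X)$ from below by $|\T|$ and conclude via $E_D\geq E_A$ from \eqref{e:DA_ineq}. The differences are in the implementation of each half. For the topological step, the paper homotopes $TX$ linearly to a boundary homeomorphism $TY$ (e.g.\ the trace of the Riemann map) inside $\ell_1\cup\ell_2\cup\ell_3$ and then invokes the boundary theorem for the mapping degree, whereas you contract each boundary arc within its supporting line to the corresponding edge of $\partial\T$ and use that the boundary loop of a continuous map of the closed disk into $\Real^2\setminus\set{z}$ is null-homotopic there, hence has zero winding number; these are equivalent (both homotopies stay in the union of the lines), and your explicit verification that $X(p_i)=c_i$ — upgrading the a.e.\ line constraints of $\C^*$ to everywhere constraints by continuity of $TX$ — is a point the paper leaves implicit, as is the fact that $\bar\Omega$ is a topological closed disk, which your route needs and which holds for a bounded simply connected Lipschitz domain. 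For the measure-theoretic step, the paper proves the auxiliary Lemma~\ref{lem:Area} by hand (Sard's theorem, the inverse function theorem, a Lindel\"of subcover and a partition of unity), while you cite the area formula for locally Lipschitz maps together with an exhaustion of $\Omega$ by compact sets and monotone convergence; this buys brevity and even the multiplicity identity $E_A(X)=\int_{\Real^2}\#\parr{X^{-1}(z)\cap\Omega}\,dz$, at the cost of invoking a heavier classical result, whereas the paper's argument is elementary and self-contained. Either way the chain $E_D(X)\geq E_A(X)\geq|\T|$ is established, so your proposal proves the lemma.
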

\begin{proof}
Take an arbitrary $X\in \C^*(\Omega,\partial \T)\cap C(\bar \Omega,\Real^2)\cap C^\infty(\Omega,\Real^2)$. We first want to prove that every point $q$ in  $\T$ (the interior of the triangle) has a pre-image $p\in \Omega$.  Assume $q\in\T$, the winding number of $q$ w.r.t.~the restriction of $X$ to $\partial\Omega$ is $w(q,TX)=1$. To see that consider a homeomorphism $Y:\bar \Omega\too \bar \T$ satisfying $Y(p_i)=c_i$, $i=1,2,3$ (\eg, the Riemann mapping). Consider the homotopy $H(\cdot,t)=(1-t)TX(\cdot)+t\, TY(\cdot)$. Note that the image of $H(\cdot,t)$ is contained in $\cup_{i=1}^3 \ell_i$ and since $q$ does not belong to the latter set the winding number $g(t)=w(q,H(\cdot,t))$ is a continuous function of $t$. Since $TY:\partial\Omega\too\partial\T$ is a homeomorphism we have that $g(1)=w(q,H(\cdot,1))=w(q,TY)=1$. We also know that $g(t)\in\Integer$ and therefore $w(q,TX)=w(q,H(\cdot,0))=g(0)=1$. Now to show that $p$ has a pre-image under $X$ we use a mapping degree argument. Assume toward a contradiction that it does not have a pre-image.  Then by the boundary theorem (see \cite{outerelo2009mapping} Proposition 4.4) since $q\notin X(\bar \Omega)$ we have $w(q,TX)=0$, contradiction. The lemma now follows from the following intuitive lemma (Lemma \ref{lem:Area}) which we prove in the appendix.
\end{proof}

\begin{lemma}\label{lem:Area}
	If the image of $X \in C(\bar \Omega,\Real^2)\cap C^\infty(\Omega,\Real^2)$ contains $\T$ then $E_D(X) \geq E_A(\T)\geq |\T| $.
\end{lemma}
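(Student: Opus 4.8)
The plan is to prove Lemma~\ref{lem:Area}: if $X \in C(\bar\Omega,\Real^2) \cap C^\infty(\Omega,\Real^2)$ has image containing the closed triangle $\T$, then $E_D(X) \geq E_A(X) \geq |\T|$. The second inequality $E_D(X) \geq E_A(X)$ is already recorded in the excerpt (it is \eqref{e:DA_ineq}, valid pointwise from $|\det A| \leq \tfrac12|A|_F^2$), so the real content is the geometric statement $E_A(X) = \int_\Omega |\det[X_u\ X_v]| \geq |\T|$, i.e.\ a map whose image covers $\T$ must sweep out at least the area of $\T$. This is a ``no free lunch'' statement for the (unsigned) area functional and should follow from the coarea formula / change of variables with multiplicity.

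First I would invoke the area formula (change of variables for Lipschitz, indeed $C^1$, maps; see Federer, or Evans--Gariepy): for a $C^1$ map $X$ on the open set $\Omega$,
\begin{equation*}
\int_\Omega |\det DX(p)|\, dp \;=\; \int_{\Real^2} \#\{p \in \Omega : X(p) = q\}\, dq \;=\; \int_{\Real^2} N(X,\Omega,q)\, dq,
\end{equation*}
where $N(X,\Omega,q)$ is the (possibly infinite) number of preimages. Thus $E_A(X) = \int_{\Real^2} N(X,\Omega,q)\,dq$. So it suffices to show that $N(X,\Omega,q) \geq 1$ for almost every $q \in \T$, equivalently that the interior $\mathring\T$ is, up to a null set, contained in $X(\Omega)$.

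The delicate point — and I expect this to be the main obstacle — is that the hypothesis only gives $\T \subseteq X(\bar\Omega)$, with preimages possibly sitting on the boundary $\partial\Omega$, whereas the area formula integrates preimages in the \emph{open} set $\Omega$ only. I would handle this by a degree/winding-number argument essentially identical to the one already used in the proof of Lemma~\ref{lem:lower_bound}: for every $q$ in the open triangle $\mathring\T$, we have $q \notin \cup_i \ell_i$, hence $q \notin X(\partial\Omega)$ when $X \in \C^*$ (condition~(\ref{e:c_cond2_relaxed})), and the homotopy argument there gives $w(q, TX) = 1$; then the boundary theorem (\cite{outerelo2009mapping}, Prop.~4.4) forces $q \in X(\Omega)$ — in fact $\deg(X,\Omega,q) = 1 \neq 0$ so there is an interior preimage. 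Actually the statement of Lemma~\ref{lem:Area} as written drops the $\C^*$ hypothesis; to keep it self-contained one can argue directly: $X(\bar\Omega)$ is compact, $X(\partial\Omega)$ is compact, and since $\mathring\T$ is connected and not contained in the compact set $X(\partial\Omega)$ unless $X(\partial\Omega) \supseteq \mathring\T$, the set $\mathring\T \setminus X(\partial\Omega)$ is a nonempty open set on which the local degree $\deg(X,\Omega,\cdot)$ is defined and locally constant; one shows it equals $1$ somewhere (via a comparison homotopy to a homeomorphism, exactly as in Lemma~\ref{lem:lower_bound}), hence is $1$ on the whole connected piece containing that point. In any case one concludes $\deg(X,\Omega,q) \neq 0$, so $N(X,\Omega,q) \geq 1$, for all $q$ in a full-measure subset of $\mathring\T$ (the set of $q \notin X(\partial\Omega)$ has full measure in $\mathring\T$ since $X(\partial\Omega)$ has measure zero, being the continuous — indeed here, within $\C^*$, piecewise-affine-targeted — image of the $1$-dimensional set $\partial\Omega$; more carefully one uses that $X(\partial\Omega) \subseteq \cup_i \ell_i$ is contained in a finite union of lines).

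Putting it together: $E_D(X) \geq E_A(X) = \int_{\Real^2} N(X,\Omega,q)\,dq \geq \int_{\mathring\T} N(X,\Omega,q)\,dq \geq \int_{\mathring\T} 1\,dq = |\T|$, which is the claim. I would also note the strict-equality discussion: equality $E_D(X) = |\T|$ forces $X$ to be conformal a.e.\ (equality in $|\det A| \le \frac12 |A|_F^2$) and $N(X,\Omega,q) = 1$ a.e.\ on $\T$ with no mass outside, which is the rigidity input needed to pin down the minimizer as the Riemann map $\Phi$ in the proof of Theorem~\ref{thm:main}; but that belongs to the main proof rather than to this lemma. The only technical care needed is the validity of the area formula for a merely $C^\infty(\Omega) \cap C(\bar\Omega)$ map with possibly infinite Dirichlet energy — but if $E_D(X) = \infty$ the inequality is trivial, so we may assume $X \in H^1$, and then $|\det DX| \in L^1(\Omega)$ and the $C^1$ area formula applies on $\Omega$ directly.
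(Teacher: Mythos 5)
Your proposal is correct in substance, but it takes a different route than the paper for the key step. The paper's proof does not invoke the area formula: it deletes the critical values of $X$ by Sard's theorem, chooses for each regular value $q$ an \emph{interior} preimage, uses the inverse function theorem to get a neighborhood on which $X$ is a diffeomorphism, extracts a countable subcover by Lindel\"of's lemma and a subordinate partition of unity, and then sums the change-of-variables identity chart by chart to obtain $|\T|\le \int_{\W}\abs{\det\brac{X_u\,X_v}}\le E_A(X)$. You replace this hands-on covering construction by a single citation of the area formula with multiplicity, $E_A(X)=\int_{\Real^2}N(X,\Omega,q)\,dq$, together with $N(X,\Omega,q)\ge 1$ for a.e.\ $q$ in the interior of $\T$; that is shorter, makes the mechanism transparent, and even renders Sard's theorem unnecessary (the area formula absorbs the critical set). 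What the paper's argument buys is self-containedness with elementary tools (IFT, partition of unity, classical change of variables); what yours buys is brevity, at the price of importing a measure-theoretic theorem. Your observation that the stated inequality should read $E_A(X)$ rather than $E_A(\T)$ is also right.

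One caveat. Much of your text is devoted to securing interior preimages by a degree argument; in the paper that argument lives in Lemma \ref{lem:lower_bound}, and Lemma \ref{lem:Area} is invoked only after it, which is why its proof can simply ``choose some pre-image $p(q)\in\Omega$''. Your fallback to that setting is fine, but your attempt to make the lemma self-contained \emph{without} the $\C^*$ hypothesis does not stand on its own: without condition (iii) of Definition \ref{C*} the set $X(\partial\Omega)$ need not be Lebesgue-null (a merely continuous boundary map can have image of positive measure), the open set $\mathring{\T}\setminus X(\partial\Omega)$ need not be connected and the degree may vanish on some of its components, and the comparison homotopy that forces the winding number to be $1$ uses precisely the line constraints to keep $q$ off the image of the homotopy. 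Since you ultimately hedge back to the $\C^*$ context (boundary image contained in the three lines $\ell_i$, homotopy as in Lemma \ref{lem:lower_bound}), which is exactly the setting in which the paper applies the lemma, this does not affect the correctness of your main line of proof; just don't present the unconstrained version as established.
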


\begin{lemma}\label{lem:density}
$\C^*(\Omega,\partial\T)$ equals the closure in $H^1(\Omega,\Real^2)$ of $\C^*(\Omega,\partial\T)\cap C(\bar\Omega,\Real^2)\cap C^\infty(\Omega,\Real^2)$.
\end{lemma}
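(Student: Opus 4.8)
The plan is to show that any $X \in \C^*(\Omega,\partial\T)$ can be approximated in $H^1$ by maps that are smooth in $\Omega$, continuous on $\bar\Omega$, and still satisfy the line constraints \eqref{e:c_cond2_relaxed}. Since $\C^*(\Omega,\partial\T)$ is by Definition \ref{C*} already the $H^1$-closure of maps satisfying conditions (i)--(iii), and the closure operation is idempotent, it suffices to approximate a map $X$ satisfying conditions (i)--(iii) — i.e. $X \in H^1$, $TX \in C(\partial\Omega,\Real^2)$, $a_i^T TX|_{\Gamma_i} + b_i = 0$ — by elements of $\C^*(\Omega,\partial\T)\cap C(\bar\Omega,\Real^2)\cap C^\infty(\Omega,\Real^2)$. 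So I fix such an $X$ and must produce smooth-up-to-the-interior, globally continuous approximants obeying the same three line equations on the boundary arcs.

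First I would reduce to the case of a \emph{piecewise-affine boundary datum}. The three lines $\ell_i$ through the corners of $\T$ already partition $\Real^2$; I want to realise a convenient ``reference'' map — for concreteness, compose a fixed homeomorphism $Y:\bar\Omega\to\bar\T$ (say the Riemann map, which lies in $\C(\Omega,\partial\T)\subset\C^*$) — and write $X = Y + (X-Y)$. Set $W = X - Y$. Then $W \in H^1$, $TW \in C(\partial\Omega)$, and on each arc $\Gamma_i$ we have $a_i^T TW = a_i^T TX - a_i^T TY = (-b_i)-(-b_i) = 0$, so $TW$ takes values in the linear subspace $\ell_i^0 = \{Z : a_i^T Z = 0\}$ on $\Gamma_i$. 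Thus $W \in \V^*(\Omega,\partial\T)$ and it is enough to approximate elements of $\V^*$ by smooth elements of $\V^*$ that are continuous on $\bar\Omega$: adding back $Y$ then gives the claim, since $Y$ is already continuous on $\bar\Omega$ (and can itself be pre-approximated in $H^1$ by smooth maps if one wants full smoothness, though conditions only ask smoothness in the open set for $W$'s part). The boundary constraint on $W$ is now \emph{homogeneous and linear}: on $\Gamma_i$, $TW \perp a_i$.

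Second, the core approximation. Because the constraint ``$a_i^T TW = 0$ on $\Gamma_i$'' is linear and the arcs $\Gamma_i$ are Lipschitz, I would argue by a standard partition-of-unity-plus-local-flattening scheme, as in the classical proof that $C^\infty(\bar\Omega)$ is dense in $H^1(\Omega)$ for Lipschitz $\Omega$ (see e.g. \cite{brenner2007mathematical}), modified to respect the constraint. Cover $\bar\Omega$ by finitely many open sets: one interior set $U_0 \Subset \Omega$ and boundary sets $U_j$ each meeting at most one open arc $\mathrm{int}\,\Gamma_i$ (away from the three marked points $p_1,p_2,p_3$, which need separate small neighbourhoods $U_j'$ straddling two arcs). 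On an interior patch, mollification suffices. On a patch $U_j$ over $\mathrm{int}\,\Gamma_i$: flatten the boundary by a bi-Lipschitz change of variables, translate the support slightly inward, and mollify; since the target constraint $TW \in \ell_i^0$ is a fixed \emph{linear} subspace condition, and mollification / translation are linear operations that commute with the constant projection $P_i = I - a_i a_i^T$, the mollified map still has boundary trace in $\ell_i^0$ on that arc. (Concretely: apply $P_i$ componentwise to the whole local approximant — this costs nothing in $H^1$ norm since $\|P_i\|\le 1$ and $P_i W = W$ there — forcing the constraint exactly.) Near a marked point $p_i$, where $\Gamma_{i-1}$ and $\Gamma_i$ meet, the trace must satisfy \emph{two} line constraints simultaneously; but $\ell_{i-1}\cap\ell_i = \{c_i\}$ is a single point, so $TW$ near $p_i$ on $\Gamma_{i-1}\cup\Gamma_i$ is forced to the single value $0$ (recall $W$ is the ``difference'' map and $TX(p_i)=c_i$), and one approximates in a tiny neighbourhood by maps vanishing on that boundary piece — again a linear constraint preserved by mollification after composing with a cutoff. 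Glue the local approximants with a subordinate smooth partition of unity $\{\varphi_j\}$; on each arc the convex/linear combination $\sum_j \varphi_j (\text{local approx})$ still lands in $\ell_i^0$ because every term that is nonzero there already does. The result is smooth in $\Omega$, continuous on $\bar\Omega$, $H^1$-close to $W$, and satisfies the line constraints — hence lies in $\C^*(\Omega,\partial\T)\cap C(\bar\Omega)\cap C^\infty(\Omega)$.

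The main obstacle is the \emph{interaction of the three constraints at the marked points} $p_1,p_2,p_3$: away from them each boundary arc carries a single codimension-one linear constraint and the ordinary density proof goes through essentially verbatim after projecting with $P_i$, but at a corner of the prescribed boundary polygon the two constraints collapse the admissible trace value to a point, so one must check that the original $X$ already has the compatible behaviour there (it does, since $TX$ is continuous with $TX(p_i)=c_i$, forcing $TW(p_i)=0$) and that the local mollification can be carried out while keeping the trace pinned — this is where one uses that $W$ is genuinely continuous on $\partial\Omega$ near $p_i$ with a known limit, so subtracting the (already smooth, already admissible) reference map $Y$ was the right move to make the constraint homogeneous and hence stable under the linear smoothing operations. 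A secondary technical point is ensuring the Lipschitz bi-Lipschitz flattening charts exist and overlap correctly, which is exactly the hypothesis that $\Omega$ is a Lipschitz domain and is handled by the cited extension/trace machinery.
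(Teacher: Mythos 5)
Your reduction steps are fine: passing to maps satisfying (i)--(iii) via idempotence of closure, subtracting a reference map $Y\in\C^*(\Omega,\partial\T)\cap C(\bar\Omega,\Real^2)\cap C^\infty(\Omega,\Real^2)$ to make the constraints homogeneous, and observing that continuity of $TX$ forces $TX(p_i)=c_i$ (hence $TW(p_i)=0$) are all correct. The gap is in the core local approximation step. The constraint $a_i^TTW=0$ is a condition on the \emph{trace} only; in the interior of a boundary patch $U_j\cap\Omega$ the map $W$ has, in general, a nonzero $a_i$-component. The standard flatten--translate--mollify construction defines the approximant near $\Gamma_i$ by averaging \emph{interior} values of $W$, so its boundary trace need not lie in $\ell_i^0$ at all; the observation that mollification commutes with the constant projection $P_i=I-a_ia_i^T$ only yields $P_i(\text{mollified }W)=\text{mollified}(P_iW)$ and does not rescue this. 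Your proposed fix --- ``apply $P_i$ componentwise to the whole local approximant, this costs nothing since $P_iW=W$ there'' --- is where the argument breaks: $P_iW=W$ holds only on the arc $\Gamma_i$ (a set of measure zero), not on the patch, so the projected approximant is $H^1$-close to $P_iW$, which is at a fixed, non-small $H^1$ distance from $W$ on $U_j\cap\Omega$. Shrinking the patches does not help, because the partition-of-unity gradients then blow up faster than $\norm{W}_{H^1}$ decays on the shrinking collar. A similar issue recurs at the marked points: $TW$ vanishes only \emph{at} $p_i$, not on a boundary neighbourhood of $p_i$, so ``approximate by maps vanishing on that boundary piece'' requires an additional cutoff/modification argument (with an error estimate in $H^1$) that you do not supply. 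To repair the scheme one would have to correct the (small in $H^{1/2}$) violation of the trace constraint by subtracting a controlled extension of it --- which is essentially a harmonic-replacement argument.

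Indeed, the paper's proof avoids all of this by doing the harmonic replacement globally: given $Y$ satisfying (i)--(iii), let $U$ be the solution of the Dirichlet problem with boundary data $TY$; then $U$ is harmonic (hence $C^\infty(\Omega,\Real^2)$), continuous on $\bar\Omega$ since Lipschitz domains are regular for continuous boundary data, and $TU=TY$ satisfies the line constraints, so $U\in\C^*(\Omega,\partial\T)\cap C(\bar\Omega,\Real^2)\cap C^\infty(\Omega,\Real^2)$. Since $T(Y-U)=0$, the difference lies in $H^1_0(\Omega,\Real^2)$ and is approximated by $W\in C_0^\infty(\Omega,\Real^2)$, and $U+W$ is the desired approximant. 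If you want to salvage your local construction, the cleanest route is to graft in exactly this idea (extend/correct the trace error by a bounded right inverse of the trace operator or by harmonic extension), at which point the partition of unity becomes unnecessary.
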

\begin{proof}
Take $\eps>0$. Let $X\in\C^*(\Omega,\partial\T)$. Then there exist $Y$ satisfying \eqref{e:c_cond1}-\eqref{e:c_cond2_relaxed} in Definition \eqref{C*} so that $\norm{Y-X}\leq\eps/2$. Since $TY$ has continuous representative and has an extension to $\Omega$ that is in $H^1(\Omega,\Real^2)$, we can solve a Dirichlet problem with $TY$ as boundary condition. Let the solution be called $U$. Note that $U\in \C^*(\Omega,\partial\T)\cap C(\Omega,\Real^2)\cap C^\infty(\Omega,\Real^2)$. Furthermore, $T(Y-U)=0$. So we can approximate $Y-U$ with $W\in C_0^\infty(\Omega,\Real^2)$ (\ie, the space of compactly supported smooth functions in $\Omega$), \ie, $\norm{Y-U-W}\leq \eps/2$. Note that $$W+U\in \C^*(\Omega,\partial\T)\cap C(\Omega,\Real^2)\cap C^\infty(\Omega,\Real^2),$$ and
$$\norm{X-(W+U)}\leq \norm{X-Y} + \norm{Y-U-W}\leq \eps.$$
\end{proof}

This concludes the proof of Theorem \ref{thm:main}.

\section{Convergence of Finite-Element Approximations}
We would now like to approximate the Riemann mapping $\Phi:\Omega\too \T$ using the Ritz-Galerkin method. Given a series of regular triangulations $\M_h=(\V_h,\E_h,\F_h)$ of the polygonal domain $\Omega$, with maximal edge length $h\too 0$ we denote by $\Lambda_h\subset \C^*(\Omega,\T)$ the set of continuous piecewise-linear mappings defined over $\M_h$. That is, $\Psi\in \Lambda_h$ is a continuous function that is affine when restricted to each triangle face, $\Psi\vert_{f}$, $f\in \F_h$.  We approximate the Riemann mapping $\Phi$, by solving \eqref{e:plateau_relaxed_fem}. Restricted to the finite-dimensional space $\Lambda_h$, \eqref{e:plateau_relaxed_fem} is a linearly constrained quadratic minimization problem. Indeed, let $\varphi_1,\ldots,\varphi_{\abs{\V_h}}$ denote the standard FEM basis of the continuous piecewise-linear scalar functions over $\M_h$ defined by $\varphi_j(v_k)=\delta_{jk}$, for all $j$ and vertices $v_k\in\V_h$. Then the admissible set \eqref{e:plateau_relaxed_fem_admissible}, $\C^*(\Omega,\partial\T)\cap \Lambda_h$,  can be written as the following affine set in $\Real^{2|\V_h|}$:
\begin{align*}
&X = \sum_j x_j \varphi_j, \qquad \ \set{x_j}_{j=1}^{|\V_h|}\subset \Real^2 \\
&a_i^T  x_k  + b_i = 0, \qquad \forall v_k\in\V_h\cap \Gamma_i, \quad i=1,2,3
\end{align*}
The Dirichlet energy is:
$$E_D(X)=\sum_{kl} x_k x_l W_{kl}, \qquad W_{kl} = \int_\Omega \ip{\nabla \varphi_k, \nabla \varphi_l}$$ 
Lemma \ref{lem:ed_coer} implies that this quadratic form is strictly positive-definite over $\C^*(\Omega,\partial\T)\cap \Lambda_h$ therefore \eqref{e:plateau_relaxed_fem} has a unique solution. This solution is computed by solving the corresponding sparse linear Lagrange multiplier system which can be solved efficiently with, \eg, a direct linear solver. Denoting the solution to \eqref{e:plateau_relaxed_fem} by $\Psi_h$ we would like to prove convergence of $\Psi_h$ to $\Phi$ as the maximal edge length of the triangulations $\M_h$ goes to zero.  For that end, we will use Theorem \ref{thm:main} that identifies $\Phi$ as the unique minimum of the relaxed Plateau's problem \eqref{e:plateau_relaxed}, the coercivity of $E_D$ over $\C^*(\Omega,\partial\T)$, and employ a result from finite element method called C\'{e}a's lemma \cite{brenner2007mathematical} (proved in the appendix for the sake of completeness):

\begin{lemma}\label{lem:cea} \textbf{(C\'{e}a)} Let $\Phi$ be the unique Riemann map in $\C^*(\Omega,\partial\T)$, and $\Psi$ the solution of \eqref{e:plateau_relaxed_fem}. Then, $$\norm{\Phi-\Psi_h} \leq C\norm{\Phi - X}, \quad  \forall  X \in \C^*(\Omega,\partial\T)\cap \Lambda_h,$$ where $C$ is a constant independent of the choice of $\Lambda_h$. \end{lemma}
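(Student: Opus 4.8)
The plan is to prove C\'ea's lemma by exploiting the variational structure of problem \eqref{e:plateau_relaxed_fem}, just as in the classical FEM setting. First I would record the variational characterization of the minimizers. Since $E_D$ restricted to the affine space $\C^*(\Omega,\partial\T)$ is a coercive (Lemma \ref{lem:ed_coer}) strictly convex quadratic, its minimizer $\Phi$ is characterized by the first-order (Euler--Lagrange) condition: writing $a(X,Y)=\int_\Omega \ip{\nabla X,\nabla Y}$ for the symmetric bilinear form whose quadratic form is $2E_D$, the minimizer $\Phi\in\C^*(\Omega,\partial\T)$ satisfies $a(\Phi,V)=0$ for every $V\in\V^*(\Omega,\partial\T)$, the linear part of the admissible set. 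The same reasoning applied to the finite-dimensional problem \eqref{e:plateau_relaxed_fem} gives that $\Psi_h\in\C^*(\Omega,\partial\T)\cap\Lambda_h$ satisfies $a(\Psi_h,V_h)=0$ for every $V_h\in\V^*(\Omega,\partial\T)\cap\Lambda_h$. The crucial compatibility observation is that $\V^*(\Omega,\partial\T)\cap\Lambda_h\subset\V^*(\Omega,\partial\T)$, so the continuous Euler--Lagrange identity may be tested against discrete directions.

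Next I would derive Galerkin orthogonality. Subtracting the two first-order conditions, for any $V_h\in\V^*(\Omega,\partial\T)\cap\Lambda_h$ we get $a(\Phi-\Psi_h,V_h)=0$. Then for an arbitrary competitor $X\in\C^*(\Omega,\partial\T)\cap\Lambda_h$, note that $X-\Psi_h\in\V^*(\Omega,\partial\T)\cap\Lambda_h$ (difference of two elements of the discrete affine set), so $a(\Phi-\Psi_h,X-\Psi_h)=0$. Writing $\Phi-\Psi_h=(\Phi-X)+(X-\Psi_h)$ and using this orthogonality, $a(\Phi-\Psi_h,\Phi-\Psi_h)=a(\Phi-\Psi_h,\Phi-X)$. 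Now I would close the estimate with coercivity on the left and continuity (Cauchy--Schwarz for $a$, or simply $a(Y,Z)\le 2\norm{Y}\,\norm{Z}$) on the right: by Lemma \ref{lem:ed_coer}, $2c\norm{\Phi-\Psi_h}^2\le a(\Phi-\Psi_h,\Phi-\Psi_h)=a(\Phi-\Psi_h,\Phi-X)\le 2\norm{\Phi-\Psi_h}\,\norm{\Phi-X}$, and dividing by $2c\norm{\Phi-\Psi_h}$ (the case $\Phi=\Psi_h$ being trivial) yields $\norm{\Phi-\Psi_h}\le c^{-1}\norm{\Phi-X}$, so $C=1/c$ works, and $c$ depends only on $\Omega$, $\T$ and the three points --- not on $\Lambda_h$.

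The one subtlety worth spelling out carefully, and the place I expect the main (admittedly modest) obstacle, is justifying that $\Phi$ itself --- not merely some element of the full admissible set --- satisfies $a(\Phi,V)=0$ for all $V\in\V^*(\Omega,\partial\T)$, and analogously for $\Psi_h$. This requires knowing that $\Phi$ lies in $\C^*(\Omega,\partial\T)$ (Theorem \ref{thm:main}) and that $\C^*(\Omega,\partial\T)$ is genuinely an affine subspace of $H^1(\Omega,\Real^2)$ with linear part $\V^*(\Omega,\partial\T)$, so that $\Phi+tV\in\C^*(\Omega,\partial\T)$ for all $t\in\Real$ and the scalar function $t\mapsto E_D(\Phi+tV)$ is a differentiable quadratic minimized at $t=0$; its derivative at $0$ is exactly $a(\Phi,V)$. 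Affineness of $\C^*(\Omega,\partial\T)$ follows because it is defined as the $H^1$-closure of the solution set of an inhomogeneous system of linear constraints on $TX$ (the trace operator is bounded and linear, so the closure of an affine set is affine). Likewise $\C^*(\Omega,\partial\T)\cap\Lambda_h$ is a finite-dimensional affine set with linear part $\V^*(\Omega,\partial\T)\cap\Lambda_h$, so the same variational argument applies to $\Psi_h$. Once these structural facts are in place, the proof is the textbook C\'ea argument, and I would present it at roughly the level of detail above.
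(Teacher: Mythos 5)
Your proof is correct and is essentially the paper's own argument: derive the Euler--Lagrange identity for $\Phi$ over $\C^*(\Omega,\partial\T)$ and for $\Psi_h$ over $\C^*(\Omega,\partial\T)\cap\Lambda_h$, use $\V^*(\Omega,\partial\T)\cap\Lambda_h\subset\V^*(\Omega,\partial\T)$ to get Galerkin orthogonality $\ip{\Phi-\Psi_h,X-\Psi_h}_D=0$, then combine coercivity (Lemma \ref{lem:ed_coer}) with Cauchy--Schwarz to obtain $C=1/c$. The only difference is cosmetic (your bilinear form is twice the paper's), and your explicit justification that $\C^*(\Omega,\partial\T)$ is affine, so $\Phi+tV$ stays admissible, is a point the paper leaves implicit.
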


\begin{proof}[ Proof of Theorem \ref{thm:convergence}]
C\'{e}a's lemma reduces the problem of showing that $\norm{\Phi-\Psi_h}\too 0$ to an approximation problem, \ie, 
\begin{equation}\label{e:approx}
\inf_{X\in \C^*(\Omega,\partial\T)\cap\Lambda_h}\norm{\Phi-X}\too 0,
\end{equation}
as $h\too 0$. We will prove \eqref{e:approx} using the following lemma (proven in the appendix):
\begin{lemma} \label{lem:smooth}
                There is a sequence of functions $\Phi_\epsilon \subseteq  \C^*(\Omega,\partial\T) \cap C^\infty(\bar \Omega,\Real^2)$ which converges to the Riemann mapping $\Phi$ in $H^1(\Omega,\Real^2)$. 
\end{lemma}

The triangle inequality, $\norm{\Phi-X}\leq \norm{\Phi-\Phi_\epsilon} + \norm{\Phi_\epsilon - X}$ and Lemma \ref{lem:smooth} imply that it is enough to approximate $\Phi_\epsilon$ with $X\in  \C^*(\Omega,\partial\T)\cap\Lambda_h$. We take $X_h$ to be the interpolant of $\Phi_\epsilon$, that is the unique function in $\Lambda_h$ which agree with $\Phi_\epsilon$ on the vertices of $\mathcal{T}_h$, \ie, $X_h(v_i)=\Phi_\epsilon(v_i)$, for all $v_i\in \V_h$. Note that $X_h\in \C^*(\Omega,\partial\T)\cap\Lambda_h$. A standard approximation result in the theory of finite elements (\eg, Theorem 4.4.20 in \cite{brenner2007mathematical}) states that since, in particular,  $\Phi_\epsilon \in W_{2}^2(\Omega,\Real^2) $, we have that $\norm{\Phi_\epsilon - X_h}\too 0$ as $h\too 0$. So convergence in $H^1$ norm is proven. That is, $\norm{\Phi-\Psi_h}\too 0$ as $h\too 0$.

To prove uniform convergence, we assume that $\M_h$ are 3-connected Delaunay and that $\T$ is an Euclidean orbifold, namely an equilateral or right-angled isosceles. In this case the Orbifold-Tutte mapping $\Psi_h:\Omega\too \T$ is a homeomorphism \cite{Aigerman:2015:OTE:2816795.2818099}.


Consider $\Phi_h$ to be a solution of the following optimization problem: 
\begin{subequations}\label{e:sample}
\begin{align}\label{e:sample_E}
\min_{X } & \quad E_D(X) \\ \label{e:sample_admissible}
\mathrm{s.t.} & \quad X\in \Lambda_h\\
& \quad X(v)=\Phi(v), \quad \forall v \in \partial\Omega
\end{align}
\end{subequations}
In \cite{ciarlet1973maximum} (see Theorem 2) it is shown that $\Phi_h\too \Phi$ uniformly if $\Phi\in W_p^1(\Omega,\Real^2)$ for some $p>2$. The singularities of the Riemann mapping $\Phi$ are of the form $z^{\alpha}$ (here $z=x+i y$ is a complex variable) where $\alpha\in \Theta\subset (0,2\pi)$, $\Theta$ is a finite set of angles depending on the angles of the polygonal lines $\partial\Omega$ and $\partial\T$. A direct calculation shows that if one takes $p=2+\epsilon$ where $\epsilon>0$ is sufficiently small so that $\alpha>1-\frac{2}{p+\epsilon}$ for all $\alpha\in\Theta$ then $\Phi\in W_p^1(\Omega,\Real^2)$. Therefore, it is enough to show that $\Phi_h-\Psi_h$ converge uniformly to the zero function.

We next want to show that $T\Psi_h=\Psi_h\vert_{\partial\Omega}$ has a subsequence converging uniformly to some continuous limit function $g\in C(\partial\Omega,\Real^2)$. For this part we can assume w.l.o.g.~that $\Omega$ is the unit disk $\B$. If that is not the case we let $\varphi:\B\too\Omega$ be a Riemann mapping and consider $\Psi'_h=\Psi_h\circ \varphi$. Clearly $T\Psi'_h$ converge uniformly to $g\circ\varphi$ iff $T \Psi_h$ converge uniformly to $g$. 

Since $\norm{\Phi-\Psi_h}\too 0$, the Dirichlet energy of $\Psi_h\in H^1(\Omega,\Real^2)\cap C(\bar \Omega,\Real^2)$ is uniformly bounded (remember that the Dirichlet energy is invariant to conformal change of coordinates) and all $\Psi_h$ satisfy $\Psi_h(p_i)=c_i$, $i=1,2,3$. It is known that the Courant-Lebesgue lemma (see \eg, page 257 and Proposition 2, page 263 in \cite{dierkes2010minimal}) implies in this case that $T\Psi_h=\Psi_h\vert_{\partial\Omega}$ has a subsequence converging uniformly to some continuous limit function $g\in C(\partial\Omega,\Real^2)$.

Due to the trace theorem we have that $T\Psi_h$ converges to $T\Phi$ in $L^2(\partial\Omega,\Real^2)$ and therefore $g=T\Phi$. This implies that $T\Psi_h$ converge uniformly to $T\Phi$. Since $\Phi_h$ converge to $\Phi$ uniformly we have that $T\Phi_h$ converge to $T\Phi$ uniformly. 
Since $\M_h$ is Delaunay, $\Psi_h$ and $\Phi_h$ satisfy the discrete maximum principle \cite{floater2003one}. Hence,
$$\norm{\Phi_h-\Psi_h}_{C(\bar{\Omega},\Real^2)}\leq \norm{T\Phi_h-T\Psi_h}_{C(\partial \Omega,\Real^2)} \too  0,$$  
 and since $\Phi_h$ converges uniformly to $\Phi$, this concludes the proof of Theorem \ref{thm:convergence}.
\end{proof}

Uniform convergence easily implies Corollary \ref{cor:convergence_2}:
\begin{proof}[ Proof of Corollary \ref{cor:convergence_2}]
 Using the triangle inequality  we have
 \begin{align*}
 \norm{\Xi - (\Psi'_h)^{-1}\circ\Psi_h}_{C(\bar{\Omega},\Real^2)} & \leq
 \norm{(\Phi')^{-1}\circ\Phi - (\Phi')^{-1}\circ\Psi_h}_{C(\bar{\Omega},\Real^2)}\\
 &+\norm{(\Phi')^{-1}\circ\Psi_h - (\Psi'_h)^{-1}\circ\Psi_h}_{C(\bar{\Omega},\Real^2)}.
 \end{align*}
 Now note that
 $$\norm{(\Phi')^{-1}\circ\Psi_h - (\Psi'_h)^{-1}\circ\Psi_h}_{C(\bar{\Omega},\Real^2)}=\norm{(\Phi')^{-1} - (\Psi'_h)^{-1}}_{C(\bar{\T},\Real^2)}. $$
 Thus the theorem is proven by noting that the convergence of $\Psi_h$ to $\Phi$ and $\Psi_h'$ to $\Phi'$, which was proven in Theorem \ref{thm:convergence}, implies uniform convergence of the inverse function as we ll, and  by noting that convergence is preserved when composing a uniformly converging sequence with a continuous mapping on a compact set from the left.

\end{proof}

        \bibliographystyle{amsplain}
        {
        	\bibliography{plateaus_relaxation}}

\providecommand{\bysame}{\leavevmode\hbox to3em{\hrulefill}\thinspace}
\providecommand{\MR}{\relax\ifhmode\unskip\space\fi MR }
\providecommand{\MRhref}[2]{%
  \href{http://www.ams.org/mathscinet-getitem?mr=#1}{#2}
}
\providecommand{\href}[2]{#2}
\begin{thebibliography}{10}

\bibitem{Aigerman:2015:OTE:2816795.2818099}
Noam Aigerman and Yaron Lipman, \emph{Orbifold tutte embeddings}, ACM Trans.
  Graph. \textbf{34} (2015), no.~6, 190:1--190:12.

\bibitem{bobenko2008discrete}
Alexander Bobenko, \emph{Discrete differential geometry}, Springer, 2008.

\bibitem{bobenko2004variational}
Alexander Bobenko and Boris Springborn, \emph{Variational principles for circle
  patterns and koebe’s theorem}, Transactions of the American Mathematical
  Society \textbf{356} (2004), no.~2, 659--689.

\bibitem{bowers2003planar}
Philip~L Bowers and Monica~K Hurdal, \emph{Planar conformal mappings of
  piecewise flat surfaces}, Visualization and mathematics \textbf{3} (2003),
  3--34.

\bibitem{brenner2007mathematical}
Susanne Brenner and Ridgway Scott, \emph{The mathematical theory of finite
  element methods}, vol.~15, Springer Science \& Business Media, 2007.

\bibitem{chow2003combinatorial}
Bennett Chow, Feng Luo, et~al., \emph{Combinatorial ricci flows on surfaces},
  Journal of Differential Geometry \textbf{63} (2003), no.~1, 97--129.

\bibitem{ciarlet2002finite}
Philippe~G Ciarlet, \emph{The finite element method for elliptic problems},
  SIAM, 2002.

\bibitem{ciarlet1973maximum}
Philippe~G Ciarlet and P-A Raviart, \emph{Maximum principle and uniform
  convergence for the finite element method}, Computer Methods in Applied
  Mechanics and Engineering \textbf{2} (1973), no.~1, 17--31.

\bibitem{collins2003circle}
Charles~R Collins and Kenneth Stephenson, \emph{A circle packing algorithm},
  Computational Geometry \textbf{25} (2003), no.~3, 233--256.

\bibitem{delillo1994accuracy}
Thomas~K Delillo, \emph{The accuracy of numerical conformal mapping methods: a
  survey of examples and results}, SIAM journal on numerical analysis
  \textbf{31} (1994), no.~3, 788--812.

\bibitem{desbrun2002intrinsic}
Mathieu Desbrun, Mark Meyer, and Pierre Alliez, \emph{Intrinsic
  parameterizations of surface meshes}, Computer graphics forum, vol.~21, Wiley
  Online Library, 2002, pp.~209--218.

\bibitem{dierkes2010minimal}
Ulrich Dierkes, Stefan Hildebrandt, and Friedrich Sauvigny, \emph{Minimal
  surfaces}, Minimal Surfaces (2010), 53--90.

\bibitem{floater2003one}
Michael Floater, \emph{One-to-one piecewise linear mappings over
  triangulations}, Mathematics of Computation \textbf{72} (2003), no.~242,
  685--696.

\bibitem{de1991principe}
Yves~Colin From~Verdiere, \emph{A variational principle for stacking circles},
  Inventiones mathematicae \textbf{104}, no.~1.

\bibitem{he1996convergence}
Zheng-Xu He and Oded Schramm, \emph{On the convergence of circle packings to
  the riemann map}, Inventiones mathematicae \textbf{125} (1996), no.~2,
  285--305.

\bibitem{henrici1993applied}
Peter Henrici, \emph{Applied and computational complex analysis, volume 3:
  Discrete fourier analysis, cauchy integrals, construction of conformal maps,
  univalent functions}, vol.~3, John Wiley \& Sons, 1993.

\bibitem{hutchinson1991computing}
John~E Hutchinson et~al., \emph{Computing conformal maps and minimal surfaces},
  Theoretical and Numerical Aspects of Geometric Variational Problems, Centre
  for Mathematics and its Applications, Mathematical Sciences Institute, The
  Australian National University, 1991, pp.~140--161.

\bibitem{leoni2009first}
Giovanni Leoni, \emph{A first course in sobolev spaces}, vol. 105, American
  Mathematical Society Providence, RI, 2009.

\bibitem{levy2002least}
Bruno L{\'e}vy, Sylvain Petitjean, Nicolas Ray, and J{\'e}rome Maillot,
  \emph{Least squares conformal maps for automatic texture atlas generation},
  ACM transactions on graphics (TOG), vol.~21, ACM, 2002, pp.~362--371.

\bibitem{luo2004combinatorial}
Feng Luo, \emph{Combinatorial yamabe flow on surfaces}, Communications in
  Contemporary Mathematics \textbf{6} (2004), no.~05, 765--780.

\bibitem{marshall2007convergence}
Donald~E Marshall and Steffen Rohde, \emph{Convergence of a variant of the
  zipper algorithm for conformal mapping}, SIAM Journal on Numerical Analysis
  \textbf{45} (2007), no.~6, 2577--2609.

\bibitem{natarajan2009numerical}
Sundararajan Natarajan, St{\'e}phane Bordas, and D~Roy~Mahapatra,
  \emph{Numerical integration over arbitrary polygonal domains based on
  schwarz--christoffel conformal mapping}, International Journal for Numerical
  Methods in Engineering \textbf{80} (2009), no.~1, 103--134.

\bibitem{outerelo2009mapping}
Enrique Outerelo and Jes{\'u}s~M Ruiz, \emph{Mapping degree theory}, vol. 108,
  American Mathematical Society Providence, RI, 2009.

\bibitem{pinkall1993computing}
Ulrich Pinkall and Konrad Polthier, \emph{Computing discrete minimal surfaces
  and their conjugates}, Experimental mathematics \textbf{2} (1993), no.~1,
  15--36.

\bibitem{porter2005history}
R~Michael Porter, \emph{History and recent developments in techniques for
  numerical conformal mapping}, Proceedings of the International Workshop on
  Quasiconformal Mappings and Their Applications (IWQCMA05), 2005,
  pp.~207--238.

\bibitem{rodin1987convergence}
Burt Rodin, Dennis Sullivan, et~al., \emph{The convergence of circle packings
  to the riemann mapping}, Journal of Differential Geometry \textbf{26} (1987),
  no.~2, 349--360.

\bibitem{springborn2008conformal}
Boris Springborn, Peter Schr{\"o}der, and Ulrich Pinkall, \emph{Conformal
  equivalence of triangle meshes}, ACM Transactions on Graphics (TOG), vol.~27,
  ACM, 2008, p.~77.

\bibitem{thurston1985finite}
William Thurston, \emph{The finite riemann mapping theorem}, An International
  Symposium at Purdue University on the Occasion of the Proof of the Bieberbach
  Conjecture, 1985, 1985.

\bibitem{thurston1979geometry}
William~P Thurston and John~Willard Milnor, \emph{The geometry and topology of
  three-manifolds}, Princeton University Princeton, NJ, 1979.

\bibitem{trefethen1980numerical}
Lloyd~N Trefethen, \emph{Numerical computation of the schwarz--christoffel
  transformation}, SIAM Journal on Scientific and Statistical Computing
  \textbf{1} (1980), no.~1, 82--102.

\bibitem{tsuchiya2001finite}
Takaya Tsuchiya, \emph{Finite element approximations of conformal mappings},
  Numerical Functional Analysis and Optimization \textbf{22} (2001), no.~3/4,
  419--440.

\bibitem{tsuchiya1987discrete}
Takuya Tsuchiya, \emph{Discrete solution of the plateau problem and its
  convergence}, Mathematics of computation \textbf{49} (1987), no.~179,
  157--165.

\end{thebibliography}

\section{Appendix}
\begin{proof}[Proof of Lemma~\ref{lem:Area}]
Define $\T'=\T \setminus S$ where $S$ is the set of critical values of $X$ over $\Omega$. By Sard's theorem $|\T'|=|\T| $. For each $q\in \T'$ we choose some pre-image  $p=p(q)\in \Omega$. By construction $dX_p $ is non-singular and by the inverse mapping theorem, there exists an open neighborhood $p\in \W_q\subset \Omega$ so that $X\vert_{\W_q}$ is a diffeomorphism and is contained in $\T'$. We get an open cover $\T'= \cup_{q \in \T'} X(\W_q)$ and by  Lindel\"{o}f's lemma there exists a countable subcover $\T'= \cup_{k\in \Natural} X(\W_{q_k})$.

We choose a partition of unity $\phi_{k} $ subordinate to this cover, that is  $\phi_k(s)\geq 0$ and $\sum_k \phi_k(s)=1$, for all $s\in \T'$ , and $\mathrm{supp}(\phi_k)\subset X(\W_{{q_k}})$. Now denoting the union of all $\W_{q_k}$ by $\W$ we obtain

\begin{align*}
\area{\T}&=\int_{\T'}1 =\int_{\T'}\sum_{k } \phi_k\\&=  \sum_{k} \int_{ X(\W_{q_{k}})}\phi_k \\ &= \sum_k \int_{\W_{q_k}} \abs{\det \brac{X_u\, X_v}} \phi_k\circ X \\
&\leq \int_{\W} \abs{\det \brac{X_u\, X_v}} \sum_{k} \phi_k\circ X \\
& = \int_{\W} \abs{\det \brac{X_u\, X_v}} \\
& \leq \int_{\Omega} \abs{\det \brac{X_u\, X_v}}  = E_A(X)
\end{align*} 

\end{proof}

\begin{proof}[Proof of Lemma \ref{lem:cea}] Define the bilinear form 
	$$\ip{X,Y}_D=\frac{1}{2}\int_\Omega X_u^T Y_u + X_v^T Y_v.$$
	Note that $\ip{X,X}_D=E_D(X)$ and $\ip{X,X}_D+\frac{1}{2}\int_\Omega |X|^2 = \norm{X}^2$.  Furthermore, Cauchy-Schwarz inequality implies
	\begin{equation}\label{e:cs}
	\ip{X,Y}_D\leq \sqrt{\ip{X,X}_D}\sqrt{\ip{Y,Y}_D}\leq\norm{X}\norm{Y}.
	\end{equation}

	Let $X$ be a minimizer of $E_D$ over some affine space $\Q$. Consider $X(t) = X+tZ$, where $Z$ is a \emph{variation} of $Q$. That is, $Z=Z_1-Z_2$, where $Z_1,Z_2 \in \Q$. $X(t)\in \Q$ for all $t$. The real valued function  $g(t) = E_D(X+tZ)$ is minimized at zero, and the equation $g'(0)=0$ yields
	\begin{align*}
	\ip{X,Z}_D=0.
	\end{align*}
	Since $\Phi$, by Theorem \ref{thm:main}, minimizes $E_D$ over $\C^*(\Omega,\partial\T)$ we have $\ip{\Phi,Z}_D=0$ for all variations $Z\in \V^*(\Omega,\partial\T)$ of $\C^*(\Omega,\partial\T)$. Similarly, by definition $\ip{\Psi_h,Z}_D=0$ for all variations $Z\in \V^*(\Omega,\partial\T)\cap \Lambda_h$ of $\C^*\cap \Lambda_h$. Since $\V^*(\Omega,\partial\T)\cap \Lambda_h \subset \V^*(\Omega,\partial\T)$ we have in particular
	\begin{equation}\label{e:phi_psi}
	\ip{\Phi-\Psi_h,Z}_D=0,
	\end{equation}
	for all variations $Z\in \V^*(\Omega,\partial\T)\cap \Lambda_h$.
	
	Now, for arbitrary $X\in \C^*(\Omega,\partial\T)\cap \V_h$,
	\begin{align*}
	c\norm{\Phi-\Psi_h}^2 &\leq E_D(\Phi-\Psi_h)\\
	&= \ip{\Phi-\Psi_h,\Phi-\Psi_h}_D\\
	&= \ip{\Phi-\Psi_h,\Phi-X}_D \\
	&\stackrel{\eqref{e:cs}}{\leq}\norm{\Phi-\Psi_h}\norm{\Phi-X},
	\end{align*}
	where the first inequality is due to the coerciveness of $E_D$ over $\V^*(\Omega,\partial\T)$ proved in Lemma \ref{lem:ed_coer}. Dividing both sides by $\norm{\Phi-\Psi}$ concludes the proof.
\end{proof}

\begin{proof}[Proof of Lemma~\ref{lem:smooth}]
	Let $\rho: \Real \to [0,1] $ be a smooth function with $\rho(t)=0$ for $t \leq 1$ and $\rho(t)=1$ for $t\geq 2$. Denote by $\rho_\epsilon$ the function $\rho_\epsilon(t)=\rho(t/\epsilon) $. Let $v_i\in\partial\Omega$, $i=1,\ldots,n$, denote the corners of the boundary polygon of $\Omega$. Let $\epsilon_0>0$ be small enough so that the balls $B_{3\epsilon_0}(v_i) $ are all pairwise disjoint, and do not intersect $\partial \Omega$ except for at the two edges which share the vertex $v_i$. For any $\epsilon<\epsilon_0$ define
	\begin{equation}\label{e:PhiEps} \Phi_\epsilon(z)=\sum_{i=1}^n (1-\rho_\epsilon(|z-v_i|)) \Phi(v_i)+ \left(\prod_{i=1}^n\rho_\epsilon(|z-v_i|) \right) \Phi(z)
	\end{equation}
	Note that the restriction of $\Phi_\epsilon$ to $ B_{ \epsilon}(v_i) $ takes the constant value $\Phi(v_i) $, that for   $z \in B_{2 \epsilon}(v_i) $ we have
	$$\Phi_\epsilon(z)=(1-\rho_\epsilon(|z-v_i|))\Phi(v_i)+\rho_\epsilon(|z-v_i|) \Phi(z) $$
	and that $\Phi_\epsilon$ coincides with $\Phi$ on the set $\Omega_\epsilon=\Omega \setminus \cup_{i=1}^n  B_{2 \epsilon}(v_i)$. It is not difficult to verify that the functions $\Phi_\epsilon$ are admissible functions in $\C^*(\Omega,\Real^2)$. To show that  $\Phi_\epsilon$ are in $\C^\infty(\bar \Omega) $, we first extend $\Phi_\epsilon$ to $\U_0=\Omega \cup (\cup_{i=1}^n B_{\epsilon}(v_i)) $ by setting $\Phi_\epsilon=\Phi(v_i) $ on each set $B_{\epsilon}(v_i)$. We then select an open cover $\U_i, i=1,\ldots n $ of  $\partial \Omega \setminus (\cup_{i=1}^n B_{\epsilon/2}(v_i))$, such that each $\U_i$ contains the edge $[v_i,v_{i+1}]$ except for possibly an $\epsilon/2$ radius disc around the vertices, and the sets are all disjoint. On each such set we can use Schwarz's reflection principle to extend $\Phi$ holomorphically to an open set  $\W_i\subset \U_i$, and containing all of $[v_i,v_{i+1}]$ except for an $\epsilon/2$ radius discs around the vertices.  Thus $\Phi_\epsilon$ can be extended to a smooth function on $\U_0\cup \parr{\cup_{i=1}^n \W_i}$ as well using \eqref{e:PhiEps}. Overall we achieve an extension of $\Phi_\epsilon$ to the open set $\U_0\cup \parr{\cup_{i=1}^n \W_i}$ which contains $\bar\Omega$.

	It remains to show that $\Phi_\epsilon$ converges to $\Phi$ in $H^1 $. Due to coercivity, it is sufficient to show  $L^2$ convergence of $h_\epsilon=\partial_k \Phi_\epsilon-\partial_k \Phi$ to zero. For $z \in \Omega_\epsilon$ we have that $h_\epsilon(z)=0$, and thus it is sufficient to show that for all $i$, $\|h_\epsilon \|_{L^2(\Omega \cap B_{2 \epsilon}(v_i),\Real^2)}\rightarrow 0$. We fix some $i$, and  denote $A_{\epsilon}=\Omega \cap B_{2 \epsilon}(v_i) $ and $q(z)=|z-v_i| $.
	For $z \in A_\epsilon \setminus \{v_i\} $ we have that 
	\begin{equation}\label{e:twoSummands}
	h_\epsilon(z)=\underbrace{ \epsilon^{-1} \cdot  \partial_k q(z) \cdot \rho'(\epsilon^{-1}q(z)) \cdot \left[ \Phi(z)-\Phi(v_{i}) \right]}_{\ell_\epsilon}+ \underbrace{ \partial_k \Phi(z) \cdot \left[\rho_\epsilon(q(z))-1 \right] }_{r_\epsilon}
	\end{equation}
	We claim that the $L^2(A_\epsilon,\Real^2)$ norm of $\ell_\epsilon$ and $r_\epsilon$ goes to zero as $\epsilon\too 0$. For $\ell_\epsilon$, note that $\partial_kq $ and $\rho'$ are both bounded uniformly by a constant independent of $\epsilon$, so that there is some $M>0$ such that 
	\begin{align*}
	\int_{A_\epsilon}  |\ell_\epsilon|^2  \leq
	M\epsilon^{-2}\int_{A_\epsilon}  \abs{\Phi(z)-\Phi(v_i) }^2    \leq
	(2\pi \epsilon)^2 \epsilon^{-2} M \sup_{z \in A_\epsilon} \abs{\Phi(z)-\Phi(v_i) }^2 \rightarrow 0
	\end{align*}
	For $r_\epsilon$, note that
	$$\int_{A_\epsilon} |r_\epsilon|^2=\int_{\Omega}|r_\epsilon|^2 \one_{A_\epsilon} \rightarrow 0 $$
	where the convergence follows from the dominated convergence theorem since  $|r_\epsilon|^2 \one_{A_\epsilon}$  converges pointwise to zero almost everywhere and   is bounded by the $L^1$ function $ |\partial_k \Phi |^2$ .     \end{proof}

\paragraph*{Acknowledgements}
This research was supported by the Israel Science Foundation grant No. ISF 1830/17.

\end{document}